\documentclass[preprint,10pt]{elsarticle}

\usepackage{algorithm}
\usepackage[noend]{algpseudocode}
\usepackage{graphicx}

\usepackage{amssymb}
\usepackage{amsmath}
\usepackage{tikz}
\usepackage{textcomp}
\usepackage[utf8]{inputenc}

\journal{Elsevier}

\newenvironment{proof}[1][Proof.]{\begin{trivlist}\item[\hskip \labelsep {\bfseries #1}]}{{\hfill \ensuremath{\Box}}\end{trivlist}}

\newtheorem{theorem}{Theorem}
\newtheorem{lemma}[theorem]{Lemma}

\newtheorem{definition}{Definition}

\newcommand\NP{\ensuremath{\text{\textsc{NP}}}}
\newcommand\PTime{\ensuremath{\text{\textsc{P}}}}

\newcommand{\monitors}{M}
\newcommand{\ms}{\gamma_m}
\newcommand{\wms}{\gamma_m}
\newcommand{\MS}{\textsc{EdgeMonitoring}}
\newcommand{\WMS}{\textsc{WEM}}

\newcommand\opt{\ensuremath{\textsc{OPT}}}

\newcommand\repr{\ensuremath{\mathit{repr}}}

\newcommand{\oproblem}[3]{{%
\vspace*{0.06cm} \noindent
    \frame{%
      \frame{%
        \fbox{\begin{minipage}{0.9\textwidth}%
        \textsc{{#1}}\vspace{0.1cm}\\
        \textbf{Input:} {#2}\\%
	\textbf{Output:}{ #3}%
        \end{minipage}
	}%
      }%
    }%
}}

\newcommand{\bproblem}[3]{{%
\vspace*{0.06cm} \noindent
    \frame{%
      \frame{%
        \fbox{\begin{minipage}{0.9\textwidth}%
        \textsc{{#1}}\vspace{0.1cm}\\
        \textbf{Input:} {#2}\\%
	\textbf{Question:}{ #3}%
        \end{minipage}
	}%
      }%
    }%
}}


\begin{document}

\begin{frontmatter}

\title{Complexity of Edge Monitoring on Some Graph Classes \tnoteref{confref}}

\tnotetext[confref]{A part of this paper has been presented at Discrete Mathematics Days 2016 \cite{bagan2016edge}}

\author[GOAL]{Guillaume Bagan\corref{cor1}}
\author[GOAL]{Fairouz Beggas}
\author[GOAL]{Mohammed Haddad}
\author[GOAL]{Hamamache Kheddouci}

\address[GOAL]{Université Lyon 1, LIRIS UMR CNRS 5205, F-69621, Lyon, France}

\cortext[cor1]{Corresponding Author, guillaume.bagan@liris.cnrs.fr}

\begin{abstract}
In this paper, we study the complexity of the edge monitoring problem. 
A vertex $v$ monitors an edge $e$ if both extremities together with $v$ form a triangle in the graph. 
Given a graph $G=(V,E)$ and a weight function on edges $c$ where $c(e)$ is the number of monitors that needs the edge $e$,
the problem is to seek a minimum subset of monitors $S$ such that 
every edge $e$ in the graph is monitored by at least $c(e)$ vertices in $S$.
In this paper, we study the edge monitoring problem on several graph classes such as complete graphs, block graphs, cographs, split graphs, interval graphs and planar graphs. 
We also generalize the problem by adding weights on vertices.
\end{abstract}

\begin{keyword}
Edge monitoring, weighted edge monitoring, domination, complexity, algorithms, parameterized, approximation.
\end{keyword}

\end{frontmatter}

\section{Introduction}
In this paper, we are interested in a variant of the dominating set problem: 
the edge monitoring problem.
The edge moniroring (or watchdog technique) is a mechanism for the security of wireless sensor networks  \cite{BMVH,GZYN,DXYCX}.
The basic idea is to select some 
nodes as monitors in {a given} sensor network. These monitors are employed for carrying out monitoring operations 
by promiscuously listening to the transmission of two nodes.
They can also perform basic operations of communication and sensing in the network.

The edge monitoring problem is defined as follows.
Let $G=(V,E)$ be a graph and $c$ be an integer weight function on edges of $G$. 
An edge monitoring set of $(G, c)$ is a set of vertices $S$ such that each edge $e$ of $G$ is 
monitored by at least $c(e)$ vertices of the set $S$.
A node $v \in V$ monitors an edge $e \in E$ if its both 
end-nodes are neighbors of $v$ \textit{i.e.}, $e$ together with $v$ form a triangle in the graph.
Consider the example in Figure \ref{blacknode}. The black nodes can monitor all edges depicted in bold.

\begin{figure}[htb]
\centering

\begin{tikzpicture}[scale=0.4]

\node[draw, circle, inner sep=3pt] (v1) at (0,4) {};
\node[draw, circle, inner sep=3pt] (v2) at (1,1) {};
\node[draw, circle, inner sep=3pt] (v3) at (2,0) {};
\node[draw, circle, inner sep=3pt, fill] (v4) at (2,3) {};
\node[draw, circle, inner sep=3pt] (v5) at (4,5) {};
\node[draw, circle, inner sep=3pt, fill] (v6) at (6,3.5) {};
\node[draw, circle, inner sep=3pt] (v7) at (6,0) {};	
\node[draw, circle, inner sep=3pt] (v8) at (8,3) {};
\node[draw, circle, inner sep=3pt] (v9) at (9,5) {};
\node[draw, circle, inner sep=3pt] (v10) at (10,0) {};
\node[draw, circle, inner sep=3pt] (v11) at (11,4) {};

\draw [ultra thick] (v1) -- (v2);
\draw (v2) -- (v3);
\draw (v1) -- (v4);
\draw (v2) -- (v4);
\draw (v3) -- (v4);
\draw [ultra thick] (v1) -- (v5);
\draw (v4) -- (v5);
\draw (v4) -- (v6);
\draw (v5) -- (v6);
\draw (v3) -- (v6);
\draw (v6) -- (v7);
\draw [ultra thick] (v3) -- (v7);
\draw (v6) -- (v8);
\draw (v7) -- (v8);
\draw (v6) -- (v9);
\draw [ultra thick] (v8) -- (v9);
\draw (v7) -- (v10);
\draw (v8) -- (v10);
\draw (v8) -- (v11);
\draw (v9) -- (v11);
\draw (v10) -- (v11);

\draw[->, very thick, dotted, color=blue] (v4) -- (0.5, 2.5);
\draw[->, very thick, dotted, color=blue] (v4) -- (2, 4.5);
\draw[->, very thick, dotted, color=blue] (v6) -- (4, 0);
\draw[->, very thick, dotted, color=blue] (v6) -- (8.5, 4);

\end{tikzpicture}
\caption{Edge monitoring set of a graph}
\label{blacknode}
\end{figure}
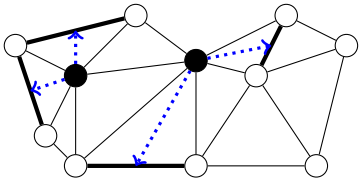

Dong et al. \cite{DXYCX} proved that the edge monitoring problem is NP-complete even restricted to unit disk graphs
and they propose a polynomial-time approximation scheme for this class of graphs.
Baste et al. \cite{parametrizedplanar2016} focused on parametrized complexity. They proved that the problem is $W[2]$-hard 
on general graphs and proposed an FPT algorithm for planar graphs and, more generally, for apex-minor-free graphs.

This paper  focuses on the complexity of the edge monitoring problem and its weighted version on different classes of graphs.
A weighted version of the edge monitoring problem is applied on graphs with weights on vertices (in addition to weights on edges).
Let $(G=(V,E),c,w))$ be a weighted graph with $w(v)$ the weight associated to a vertex $v \in V$.
The aim is to find a set $S$ that monitors $(G, c)$
and minimizes $w(S)$.

Among the classes studied in this paper, we consider block graphs, split graphs, cographs and interval graphs which are perfect graphs.
Note that the class of complete graphs is included in all graph classes mentioned before.
Since we prove that the edge monitoring problem is hard for complete graphs, we consider the problem in these 
classes with more restricted conditions.
We also have a special interest in the unit disc graphs and planar graphs.

This paper is organized as follows. Section 2 gives formal definitions of the problem and its variant.
Some basic graph terminologies and concept of complexity are also presented.
Section 3 presents some introductory results. In Section 4, we study the problem in complete graphs and block graphs.
We give a polynomial time approximation scheme for weighted complete graphs. 
Sections 5,6,7 are dedicated to interval graphs, cographs and split graphs respectively.
In section 8, we prove that the problem is NP-complete on planar unit disk graphs.
Besides, we show that there exists a PTAS for Weighted Edge Monitoring on weighted planar graphs and more generally on weighted apex-minor-free families of graphs.
The last section summarizes all results of this paper and give some suggestions for further research.

\section{Preliminaries}

In this section, we give some basic graph terminology and complexity used throughout this paper.
We also give definitions of the edge monitoring problem and all concepts used around this problem.

\subsection{Basic notions of graphs}
Graphs considered in this paper are simple, undirected and without loops.
Let $G = (V, E)$ be a graph. 
The (open) neighborhood of a vertex $v$ is $N(v) = \{ u : \{u, v\} \in E \}$.
The closed neighborhood of $v$ is $N[v] = N(v) \cup \{ v \}$.
For a set $S \subseteq V$, $N[S] = \bigcup_{v \in S}N[V]$.
The induced graph of $G$ by $S$, denoted by $G[S] = (S, E')$ contains all the edges of 
$E$ whose extremities belong to $S$.
A clique is a set $K \subseteq V$ such that each two vertices of $K$ are adjacent.
An independent set is a set $S \subseteq V$ such that no edge of $G$ has its two end vertices in $S$.
The clique number of $G$, denoted by $\omega(G)$, is the cardinality of a maximum clique in $G$. 
A graph is chordal if it has no induced cycle of length more than $3$.
The treewidth of $G$, denoted by $\mathrm{tw}(G)$, is $\min \{ \omega(H) : H \mbox{ is chordal } \wedge G 
\mbox{ is a subgraph of } H \} - 1$.
A set $S \subseteq V$ is a dominating set of $G$ if $N[S] = G$.
A set $S \subseteq V$ is a total dominating set of $G$ if $N(S) = G$.
a set $S \subseteq V $ is a double dominating set of $G$ if for every vertex $x  \in V$, $|N[x] \cap S| \geq 2$.
$\gamma(G$) (resp. $\gamma_t(G)$, $\gamma_{\times 2}(G)$) denotes the size of a  smallest dominating set (resp. total dominating set, double dominating set) of $G$ or $+\infty$ if such a set does not exist.

\subsection{Edge monitoring}
Let $e = \{v_1, v_2\}$ be an edge of a graph $G$.
We denote by $\monitors(e)$ the set of vertices $v$ such that $\{v_1, v_2, v\}$ forms a triangle. We say that $v$ monitors $e$.
Let $\alpha \geq 0$ be an integer.
A set $S \subseteq V$ $\alpha$-monitors an edge $e$ if
$|\monitors(e) \cap S| \geq \alpha$.
Let $G = (V, E)$ be a graph and $c: E \rightarrow \mathbb{N}$ be a weight function over the edges of $G$.
$S$ monitors $(G, c)$ if $S$ $c(e)$-monitors every edge $e$ in $G$.
The couple $(G, c)$ is called a weighted graph.
$\ms(G, c) = \{ |S| : S\textrm{ is a monitoring set of }(G, c) \}$ (and $+\infty$ if no monitoring set exists).
$\ms(G) = \ms(G, c)$ where $c$ is 1-uniform.



We define the problem $\MS$ as a decision problem.
However, we use the same name for the minimization problem and the parameterized version with $k$ as parameter.

\bproblem{$\MS$}
{A weighted graph $(G, c)$, an integer $k \geq 0$}
{Is there a monitoring set $S$ of $G$ such that $|S| \leq k$?}

Let $(G, w, c)$ such that $G = (V, E)$ is a graph, $w: V \rightarrow \mathbb{Q}^+$ and $c: E \rightarrow \mathbb{N}$.
$\wms(G, w, c) = \min \{ w(S) : S \textrm{ monitors } (G, c) \}$.
$(G, w, c)$ is also called a weighted graph.
Similarly to $\MS$, we define the problem $\WMS$.



\bproblem{$\WMS$}
{A weighted graph $(G, w, c)$, a number $k \in \mathbb{Q}^+$}
{Is there a monitoring set $S$ of $G, c$ such that $w(S) \leq k$?}

Let $(G, c)$ be a weighted graph with $G = (V, E)$. Then $C(G, c) = \max \{ c(e) : e \in E\}$. 
Whenever $G$ and $c$ are obvious from the context, we write $C$ instead of $C(G, c)$.
A family of weighted graphs $\mathcal{F}$ is $C$-bounded if there exists an integer $m$ 
such that $C(G, c) \leq m$ for every $(G, c) \in \mathcal{F}$.

\subsection{Complexity}

Let $X$ be a minimization problem. Let $\rho > 1$. An algorithm $A$ is called a $\rho$-approximation algorithm 
for $X$, if, for all instances $I$ of $X$, it delivers a feasible solution with objective value $A(I)$ such that
$ A(I) \leq  \rho \cdot \mbox{OPT}(I)$.
A polynomial time approximation scheme (PTAS for short) for $X$ is a family of $(1+ \epsilon)$-approximation
algorithms computable in polynomial time in the input size for any $\epsilon > 0$.

Parameterized complexity consists in studying the complexity of problems according to their input size, but also 
to another parameter. For any basic notions of parameterized complexity ($W[1]$, FPT-reduction, etc.); see \cite{Flum:2006}. 

In the folowing, we prove that 1-uniform $\MS$ cannot be approximated with a constant ratio.
We use a reduction from this problem.

\oproblem{TotalDominatingSet}
{A graph $G = (V, E)$ without isolated vertex}
{a minimum total dominating set of $G$}

\begin{theorem}\label{non-approx}
1-uniform $\MS$ cannot be approximated within $(1 - \epsilon)\ln |V|$ for any $\epsilon > 0$,
unless $\NP \subseteq \textsc{DTIME}(n^{O(\log \log n)})$.
\end{theorem}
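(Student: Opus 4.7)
The plan is to give a gap-preserving reduction from \textsc{TotalDominatingSet} to 1-uniform $\MS$. Under the stated assumption, \textsc{TotalDominatingSet} is already known to be inapproximable within $(1-\epsilon)\ln|V|$ (a consequence of Feige's hardness of \textsc{SetCover}, since a total dominating set of $G$ is exactly a cover of the ground set $V$ by the open neighbourhoods $\{N_G(v):v\in V\}$), so it suffices to transfer this hardness to $\MS$ with a matching ratio.

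Given an instance $G=(V,E)$ without isolated vertex, I would construct $G'=(V',E')$ by creating, for every $v\in V$, two fresh vertices $a_v,b_v$ and adding the edges $\{a_v,b_v\}$ together with $\{a_v,w\}$ and $\{b_v,w\}$ for every $w\in N_G(v)$; the edges of $E$ are discarded. The edge $\{a_v,b_v\}$ is designed to act as a ``test edge for $v$'': a direct check of the construction gives $\monitors(\{a_v,b_v\})=N_G(v)$, so asking that every test edge be monitored is exactly asking that $S\cap V$ be a total dominating set of $G$.

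The verification would consist of three short observations. First, for every $v\in V$ and every $w\in N_G(v)$, the only common neighbour of $a_v$ and $w$ in $G'$ is $b_v$ (and symmetrically for $b_v,w$), so any monitoring set must contain all $2|V|$ vertices $a_v,b_v$. Second, combined with the test-edge argument, this gives the lower bound $\ms(G')\ge 2|V|+\gamma_t(G)$. Third, for any total dominating set $T$ of $G$ the set $T\cup\{a_v,b_v:v\in V\}$ monitors every edge of $G'$ ($\{a_v,b_v\}$ is monitored by any vertex of $T\cap N_G(v)$, $\{a_v,w\}$ by $b_v$, and $\{b_v,w\}$ by $a_v$), proving the matching upper bound and hence $\ms(G')=2|V|+\gamma_t(G)$.

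The main obstacle I expect is transporting the exact $(1-\epsilon)\ln|V|$ factor across the reduction, because the additive $2|V|$ in $\ms(G')$ threatens to swamp the logarithmic gap when $\gamma_t(G)$ is much smaller than $|V|$. I would address this by restricting to hard \textsc{TotalDominatingSet} instances in which $\gamma_t(G)=\Theta(|V|)$ (obtainable by padding Feige's \textsc{SetCover} construction) and then re-tuning $\epsilon$; using $|V'|=3|V|$ so that $\ln|V'|\sim\ln|V|$, together with standard amplification, recovers the bound $(1-\epsilon)\ln|V'|$ for arbitrary $\epsilon>0$, as required.
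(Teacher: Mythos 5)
Your construction is internally sound --- the identity $\ms(G')=2|V|+\gamma_t(G)$ does hold, by exactly the three observations you list --- but the reduction is not approximation-preserving, and the fix you propose for this cannot work. The problem is the $2|V|$ forced vertices. A $\rho$-approximation for $\MS$ on $G'$ yields a total dominating set of size at most $\rho\,\gamma_t(G)+2(\rho-1)|V|$, and with $\rho=(1-\epsilon)\ln|V'|$ the second term is of order $|V|\ln|V|$, which swamps $\gamma_t(G)$ on the instances that witness the $\ln$-hardness of \textsc{TotalDominatingSet} (in those instances the optimum is necessarily $o(|V|)$ --- indeed $O(|V|/\ln|V|)$ --- precisely because otherwise the problem would be constant-factor approximable). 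Your proposed repair, padding so that $\gamma_t(G)=\Theta(|V|)$, destroys the hardness you are trying to transfer: on any family with $\gamma_t(G)\geq c|V|$ the trivial algorithm that outputs all of $V$ is a $(1/c)$-approximation for \textsc{TotalDominatingSet}, so no $(1-\epsilon)\ln|V|$ lower bound survives the padding. Equivalently, the $\MS$ instances your reduction produces satisfy $\ms(G')\geq 2|V|=\tfrac23|V'|$, so outputting $V'$ is already a $3/2$-approximation on them; such instances can never certify logarithmic inapproximability.

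The paper's proof avoids this by making the forced apparatus \emph{global and of constant size}: it adds a triangle $u,v,w$ with $u$ universal, so that $u,v,w$ are forced into every monitoring set, the edges $ux$ play the role of your test edges (with $\monitors(ux)=N_G(x)$), and one gets $\ms(G')=\gamma_t(G)+3$. An additive constant, unlike an additive $\Theta(|V|)$, is harmless: it is absorbed by the standard argument (the one the paper imports from Klasing and Laforest) that either $\gamma_t(G)$ is bounded by a constant, in which case the problem is solvable exactly in polynomial time, or the $+3$ perturbs the ratio by $o(1)$ and is absorbed into $\epsilon$. If you want to keep your per-vertex test-edge idea, you must share the guard vertices across all of $V$ rather than creating a fresh pair $a_v,b_v$ for each vertex.
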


\begin{proof}
It has been proved in \cite{chlebik2008approximation} that \textsc{TotalDominatingSet} cannot be approximated 
within $(1 - \epsilon)\ln |V|$ for any $\epsilon > 0$, unless $\NP \subseteq \textsc{DTIME}(n^{O(\log \log n)})$.
We will define an approximation preserving reduction from \textsc{TotalDominatingSet} to 1-uniform $\MS$.
Let $G=(V,E)$ be a graph without isolated vertex. We construct $G'$ from $G$ by adding three vertices $u,v,w$ which form 
a clique and connecting $u$ to every vertex in $V$. We will prove that $\ms(G')= \gamma_t(G)+3$.

Let $S$ be a total dominating set of $G$ and $S'= S \cup \{ u, v, w \}$. Then $S'$ is a monitoring set of $G'$.
Indeed, the edges $uv$, $uw$ and $vw$ are monitored by $w$, $v$ and $u$ respectively. The edges in $E$ are monitored by $u$.
Let $x$ be a vertex in $V$ then $x$ has a neighbor $y$ in $S$. Thus, $ux$ is monitored by $y$.

Now, let $S$ be a monitoring set of $G'$. $\{ u, v, w \} \subseteq S$. Otherwise, $uv$, $vw$ or $uw$ is not monitored by $S$.
Let $S'= S \setminus \{ u, v, w \}$. We will prove that $S'$ is a total dominating set of $G$.
Let $x$ be a vertex of $G$. The edge $xu$ is monitored by a vertex $y$ in $S'$. Since $\{x, y, u\}$ forms a triangle,
$x$ is adjacent to a vertex in $S'$. Hence, $\ms(G')= \gamma_t(G)+3$.

Using the same method as in Theorem 1 of \cite{klasing2004hardness} we obtain the desired result.

\end{proof} 
\section{Complete graphs and block graphs}
In this section we present some results of $\WMS$ problem on complete graphs and block graphs.

A block graph is a graph where each biconnected component (block) is a clique.
The block-cut tree $T$ of a connected graph $G$ is defined as follows. The vertices of $T$ are the blocks and the articulation points of $G$.
There is an edge between an articulation point $v$ and a block $B$ in $T$
if $v \in B$.

\begin{lemma}\label{unweightedclique1}
Let $(G, c)$ be a weighted graph such that $G = (V, E)$ is a complete graph, $C = \max\{ c(e) : e \in E \}$ and $|V| \geq C + 2$.
Then, $C \leq \ms(G, c) \leq C+2$.
Moreover, every set $S \subseteq V$ with $|S| \geq C+2$ is a monitoring set of $(G, c)$.
\end{lemma}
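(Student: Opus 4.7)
The plan is to exploit the fact that in a complete graph $G = (V, E)$, every vertex outside the endpoints of an edge $e = \{u, v\}$ forms a triangle with $u, v$, so $\monitors(e) = V \setminus \{u, v\}$. This reduces the monitoring condition on a set $S$ to a purely cardinality statement: $S$ monitors $(G, c)$ if and only if $|S \setminus \{u, v\}| \geq c(\{u, v\})$ for every edge $\{u, v\} \in E$.

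For the lower bound $C \leq \ms(G, c)$, I would pick an edge $e^* = \{u^*, v^*\}$ achieving $c(e^*) = C$ and note that any monitoring set $S$ must satisfy $|S| \geq |S \setminus \{u^*, v^*\}| \geq C$.

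For the upper bound and the ``moreover'' part (which implies $\ms(G, c) \leq C + 2$ once any set of size $C + 2$ is exhibited, using $|V| \geq C + 2$), I would take an arbitrary $S \subseteq V$ with $|S| \geq C + 2$ and check the monitoring condition edge by edge: for any edge $e = \{u, v\}$, at most two elements of $S$ can lie in $\{u, v\}$, hence
\[
|S \cap \monitors(e)| = |S \setminus \{u, v\}| \geq |S| - 2 \geq C \geq c(e).
\]
Therefore $S$ monitors $(G, c)$, giving both the ``moreover'' statement and, by applying it to any fixed subset of size $C + 2$ (which exists because $|V| \geq C + 2$), the upper bound $\ms(G, c) \leq C + 2$.

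There is no real obstacle here; the only subtlety is the book-keeping of the two endpoints that may or may not lie in $S$, which is handled uniformly by the bound $|S \setminus \{u, v\}| \geq |S| - 2$. The hypothesis $|V| \geq C + 2$ is used exactly once, to guarantee the existence of a candidate set of size $C + 2$ for the upper bound.
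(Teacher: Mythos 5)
Your proof is correct and follows essentially the same route as the paper's: in a complete graph $\monitors(e) = V \setminus \{u,v\}$, so any $S$ with $|S| \geq C+2$ leaves at least $C \geq c(e)$ monitors for every edge after discarding its two endpoints, and the lower bound comes from the edge of weight $C$. Your write-up is in fact slightly more explicit than the paper's about where the hypothesis $|V| \geq C+2$ is used.
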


\begin{proof}
Since there exists an edge $e$ of weight $c(e) = C$, we need $C$ vertices to monitor it. Thus, $C \leq \ms(G)$.
Let $S \subseteq V$ be a set such that $|S| \geq C+2$. Then, every edge $e$ is $c(e)$-monitored by $S$. Indeed, let $e = \{u, v\} \in E$.
Then, the set $S \setminus \{u, v\}$ of size at least $C \geq c(e)$ $c(e)$-monitors $e$.
\end{proof}

\begin{lemma}\label{unweightedclique2}
Let $(G, c)$ be a weighted graph such that $G = (V, E)$ is a complete graph and $c$ is k-uniform with $k>0$ and $|V| \geq k + 2$.
Then, $\ms(G, c) = k+2$.
\end{lemma}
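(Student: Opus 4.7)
The plan is to combine Lemma \ref{unweightedclique1} for the upper bound with a short direct argument for the matching lower bound. Setting $C = k$ in Lemma \ref{unweightedclique1} immediately gives $\ms(G,c) \leq k+2$, so the whole content of the proof is showing $\ms(G,c) \geq k+2$.

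For the lower bound, I would take an arbitrary monitoring set $S$ of $(G,c)$ and argue by cases on $|S|$. The main case is $|S| \geq 2$: since $G$ is complete, any two vertices $u,v \in S$ form an edge, and the monitors of $\{u,v\}$ inside $S$ are exactly $S \setminus \{u,v\}$ (a triangle vertex cannot be an endpoint of the edge it monitors). The $k$-uniform condition forces $|S \setminus \{u,v\}| \geq k$, hence $|S| \geq k+2$.

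The remaining (degenerate) cases $|S| \leq 1$ need to be ruled out separately. If $|S| = 0$, no edge is monitored, contradicting $k \geq 1$. If $|S| = \{s\}$, then since $|V| \geq k+2 \geq 3$ there exists some $v \neq s$, and the edge $\{s,v\}$ has monitor set $S \setminus \{s,v\} = \emptyset$ in $S$, again contradicting $k \geq 1$. So in every case $|S| \geq k+2$.

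I do not anticipate a genuine obstacle here: the upper bound is handed to us by Lemma \ref{unweightedclique1}, and the lower bound follows from the single observation that the monitors of an edge $\{u,v\}$ never include $u$ or $v$, so placing both endpoints of an edge inside $S$ costs two ``wasted'' vertices. The only care needed is the trivial case analysis to cover $|S| \leq 1$, which uses the hypothesis $|V| \geq k+2$ to ensure a witness edge exists.
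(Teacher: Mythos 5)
Your proposal is correct and follows essentially the same route as the paper: the upper bound comes from Lemma \ref{unweightedclique1}, and the lower bound rests on the observation that in a complete graph the monitors of an edge $\{u,v\}$ lying in $S$ are exactly $S \setminus \{u,v\}$, forcing $|S| \geq k+2$. You are in fact slightly more careful than the paper, which omits the $|S|=0$ case and leaves the appeal to Lemma \ref{unweightedclique1} implicit.
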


\begin{proof}Assume, for the sake of contradiction, that there exists a set $S$ that monitors $G$ such that $|S| < k+2$.
If $|S| = 1$, let $v$ be the unique element of $S$. Let $e$ an edge incident to $v$. Then, $e$ is not $c(e)$-monitored by $S$.
Otherwise, let $u$ and $v$ be two elements in $S$.
Then, $M(\{u, v\}) \cap S = |S| - 2 < k$ so $\{u, v\}$ is not monitored by $S$.
\end{proof}

\begin{theorem}\label{np-complete-on-cliques}
$\MS$ is $\NP$-complete on complete graphs. Moreover, $\MS$ is $W[1]$-complete on  complete graphs.
\end{theorem}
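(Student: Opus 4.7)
The plan is to establish in succession: (i) membership of $\MS$ in $\NP$, (ii) membership in $W[1]$ when restricted to complete graphs, and (iii) both $\NP$-hardness and $W[1]$-hardness on complete graphs, via a single reduction from \textsc{IndependentSet}. Membership in $\NP$ is immediate: a monitoring set $S$ with $|S|\leq k$ is a polynomial-size certificate, and for each edge $e$ one checks $|\monitors(e)\cap S|\geq c(e)$ in polynomial time.

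For the structural analysis on complete graphs I would first record that $\monitors(e)=V\setminus e$ for every edge $e$, so the monitoring condition rewrites as $|S\cap e|\leq |S|-c(e)$. Two consequences are central. First, if $c(e)>k$ for some $e$ then no monitoring set of size at most $k$ exists, so we may assume $C\leq k$. Second, the monitoring property is monotone: if $S\subseteq T$ and $S$ monitors $(G,c)$, then so does $T$ (adding a vertex $v$ to $S$ either leaves $|S\cap e|$ unchanged when $v\notin e$, or increments both $|S\cap e|$ and $|S|$ by one when $v\in e$). Consequently, when $|V|\geq k$, $\MS$ is equivalent to asking for a monitoring set of size exactly $k$. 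Writing $F_j$ for the spanning subgraph whose edge set is $\{e:c(e)=j\}$, a set $S$ of size $k$ is a monitoring set iff $S$ is disjoint from $V(F_k)$ and $S$ is independent in $F_{k-1}$. This is an \textsc{IndependentSet} instance of solution size $k$ on the auxiliary graph $F_{k-1}[V\setminus V(F_k)]$; the transformation is polynomial and preserves the parameter, so membership of $\MS$ on complete graphs in $W[1]$ follows from $W[1]$-membership of \textsc{IndependentSet}.

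For the hardness direction I would reduce from \textsc{IndependentSet}, which is both $\NP$-complete and $W[1]$-complete. Given $(H,\ell)$ with $|V(H)|\geq \ell$, let $G$ be the complete graph on $V(H)$, set $c(e)=\ell-1$ for $e\in E(H)$ and $c(e)=0$ otherwise, and let $k=\ell$. If $S$ is an independent set of $H$ of size $\ell$, then for every $e\in E(H)$ one has $|S\cap e|\leq 1=|S|-c(e)$, while non-$H$-edges are trivially monitored, so $S$ monitors $(G,c)$. Conversely, any monitoring set of size at most $\ell$ can be padded (by monotonicity) to a monitoring set $S^{\star}$ of size exactly $\ell$; applied to each edge of $H$, the inequality $|S^{\star}\cap e|\leq 1$ says that $S^{\star}$ is an independent set of $H$ of size $\ell$. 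The construction is polynomial and sets $k=\ell$, so it is an FPT-reduction, yielding both $\NP$-hardness and $W[1]$-hardness.

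The main obstacle I expect is the bookkeeping for $W[1]$-membership rather than the reduction itself: one has to marry the monotonicity observation with the explicit form of the complete-graph constraints to arrive at \textsc{IndependentSet} on $F_{k-1}[V\setminus V(F_k)]$, and to dispose cleanly of the degenerate cases $c(e)>k$ and $|V|<k$ (for which the answer can be read off directly without invoking any hard problem).
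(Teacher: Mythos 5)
Your proof is correct, and the hardness half is essentially the paper's own reduction: complete graph on $V(H)$, weight $\ell-1$ on the edges of $H$, weight $0$ elsewhere, parameter $k=\ell$. Where you genuinely diverge is in how you rule out small monitoring sets and in the membership direction. The paper assumes $H$ connected and argues directly that no monitoring set of size $<k$ can exist, and for membership it gives a \emph{recursive Turing} FPT-reduction (Algorithm~1) that tries every budget $k' \le k$ before querying \textsc{IndependentSet} on an auxiliary graph. You instead isolate the monotonicity of monitoring on complete graphs ($\monitors(e)=V\setminus e$, so the condition $|S\cap e|\le |S|-c(e)$ is preserved under adding vertices), which lets you pad any solution to size exactly $k$ and thereby (a) dispense with the connectivity assumption in the hardness reduction and (b) replace the recursion by a single many-one reduction to \textsc{IndependentSet} on $F_{k-1}$ restricted to the vertices not incident to a weight-$k$ edge --- the same auxiliary graph $G^*$ the paper builds inside its algorithm. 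Your route is cleaner and formally tighter for $W[1]$-membership, since $W[1]$ is closed under many-one FPT reductions whereas the paper's Turing reduction requires an extra (easy but unstated) argument that the disjunction of the $k+1$ queries stays in $W[1]$. One cosmetic slip: you define $F_j$ as a \emph{spanning} subgraph, so $V(F_k)=V$ and $V\setminus V(F_k)$ would be empty as written; you clearly intend the set of vertices incident to an edge of weight $k$ (the non-isolated vertices of $F_j$), which is what the argument uses and what the paper's $V^*$ is.
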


\begin{proof}
We will prove that $\MS$ is equivalent to \textsc{IndependentSet}
under FPT-reductions. Since  \textsc{IndependentSet} in $W[1]$-complete, the results follow.

First, we show a reduction from \textsc{IndependentSet} to $\MS$.
Let $(G = (V, E), k)$ be an instance of \textsc{IndependentSet}.
Without loss of generality, we can assume that $G$ is connected. Indeed, it is easily 
seen that \textsc{IndependentSet} remains $W[1]$-hard under this restriction.
We build an instance $(G'=(V, E'), c, k)$ of $\MS$ as follows:
$G'$ is a complete graph and for each edge $e\in E'$, we have $c(e) = k - 1$ if $e \in E$ and $c(e) = 0$ otherwise.

We show that $(G, k)$ is a positive instance of \textsc{IndependentSet} if and only if $(G',c, k)$ is a positive instance of $\MS$.
First of all, notice that there is no monitoring set of size less than $k$. Indeed, assume, for the sake of 
contradiction, that there is a monitoring set $S$ of size less than $k$.
Since $G$ is connected, there exists an edge $e$ incident to a vertex in $S$ and such that $c(e) = k-1$. We 
have $M(e) \cap S < k - 1$ so there is a contradiction.

Now, let $S \subseteq V$ such that $|S| = k$.
Then, we have:\\
$S$ is a monitoring set of $(G', c)$
iff for each $e\in E$, $|S \setminus e| \geq k - 1$
iff for each $e \in E$ in E, $|S \cap e| \leq 1$
iff $S$ is a stable of $G$.

Now, we show a Turing FPT-reduction from $\MS$ to \textsc{IndependentSet}.
The reduction is presented in Algorithm 1. Notice that this algorithm is recursive.

\begin{algorithm}
 \label{algo_w1}
\caption{}
 \begin{algorithmic}[1]
\Require $G=(V,E), c, k$
\State Let $C = \max\{ c(e) : e \in E \}$
 \If{$C > k$}
    \State \Return False
 \Else
    \If {Algorithm 1 with parameters $G, c, k-1$ returns True}
      \State \Return True
    \Else    
       \State Let $V^{*}$ built from $V$ by removing the extremities of edges $e$ with $c(e) = k$
       \State Let $E^{*}= \{ uv \in E : c(uv)=k-1 \wedge u \in V^* \wedge v \in V^* \}$
       \If{there exists an independent set of size $k$ in $G^{*}=(V^{*}, E^{*})$}
         \State \Return True
       \Else 
         \State \Return False
       \EndIf
    \EndIf 
 \EndIf 
 \end{algorithmic}
\end{algorithm}

First, let us prove that $(G, c)$ admits a monitoring set of size at most $k$ if Algorithm 1 returns True.
We proceed by induction on $k$. If $k=0$, it is clear that Algorithm 1 returns True if and only if $C=0$.
Now, assume that $k>0$. 
If Line 6 returns True then $(G,c)$ admits a monitoring set of size at most $k-1$ 
by induction hypothesis.
Assume now that Line 11 returns True. Then, there exists an independent set $S$ of size $k$ in $G^*$.
Thus, $S$ is a monitoring set of $(G,c)$. Indeed $(G,c)$ does not admit an edge $e$ with $c(e) > k$
by Lines 2-3. Edges $e$ with $c(e) = k$ have no extremities in $S$ by construction of $G^*$.
Hence, these edges are monitored by $S$.
Edges $e$ with $c(e) = k-1$ have at most one extremity in $S$ also by construction of $G^*$.
Thus, these edges are monitored by $S$.
Edges $e$ with $c(e) \leq k-2$ are necessarily monitored by $S$ since $\lvert S\rvert = k$.

Now, let us prove that Algorithm 1 returns True if $(G, c)$ admits a monitoring set $S$ of size at most $k$.
We proceed by induction on $k$. If $k=0$ then necessarily $C=0$. Thus, Algorithm 1 returns True.
Now, assume that $k>0$.
If $\lvert S \rvert \leq k-1$ then Algorithm 1 returns True in Line 6 by induction hypothesis.
Assume now that $\lvert S \rvert = k$ then it is easily seen that $S$ is an independent set of $G^*$ with $\lvert S \rvert = k$.
Then Algorithm 1 returns True in Line 11.
This completes the proof.
\end{proof}

\begin{lemma}\label{wms-cbounded-complete}
$\WMS$ can be solved in polynomial time on $C$-bounded weighted complete graphs.
\end{lemma}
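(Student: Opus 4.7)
The plan is to use Lemma \ref{unweightedclique1} to cap the cardinality of an optimal weighted monitoring set, and then enumerate all small candidate sets.

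First, since $(G,c)$ lies in a $C$-bounded family, there is a fixed integer $m$ with $C = C(G,c) \leq m$. Consider two cases. If $|V| \leq m+1$, the whole vertex set has constant size, so brute-forcing all $2^{|V|} \leq 2^{m+1}$ subsets and picking the minimum-weight one that monitors $(G,c)$ takes constant time. Otherwise $|V| \geq C+2$, and Lemma \ref{unweightedclique1} applies: every subset $S \subseteq V$ with $|S| \geq C+2$ is already a monitoring set.

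The crucial observation is that an optimal weighted monitoring set $S^{\star}$ satisfies $|S^{\star}| \leq C+2$. Indeed, if $|S^{\star}| \geq C+3$, pick any $v \in S^{\star}$; for every edge $e=\{u_1,u_2\}$ we have $|(S^{\star}\setminus\{v\})\setminus\{u_1,u_2\}| \geq |S^{\star}|-3 \geq C \geq c(e)$, so $S^{\star}\setminus\{v\}$ still monitors $(G,c)$ and has weight no larger than $w(S^{\star})$ (since $w$ is non-negative). Iterating, we may assume $|S^{\star}| \leq C+2 \leq m+2$.

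Therefore the algorithm is: enumerate all subsets $S \subseteq V$ with $|S| \leq m+2$, test in polynomial time whether $|S \setminus e| \geq c(e)$ for every $e \in E$, and return the one minimizing $w(S)$. There are at most $\sum_{i=0}^{m+2}\binom{n}{i} = O(n^{m+2})$ candidates and each is checked in $O(|E|)$ time, so the whole procedure runs in polynomial time. There is no real obstacle here: the entire argument hinges on the cardinality bound from Lemma \ref{unweightedclique1}, after which a straightforward exhaustive search over constant-size subsets suffices.
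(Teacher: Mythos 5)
Your proof is correct and follows essentially the same approach as the paper: use Lemma \ref{unweightedclique1} to bound the cardinality of an optimal solution by $C+2$ and then enumerate the $O(n^{C+2})$ candidate sets. You are in fact slightly more careful than the paper's own proof, since you explicitly justify that a \emph{minimum-weight} solution may be assumed to have size at most $C+2$ (via the vertex-removal argument) and you handle the degenerate case $|V| < C+2$, both of which the paper glosses over.
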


\begin{proof}
Let $(G = (V, E), w, c)$ with $G$ a complete graph.
By Lemma \ref{unweightedclique1}, $\ms(G, c) \leq C+2$. Therefore, it suffices to enumerate all sets $S \subseteq V$ that monitor $G$
and such that $|S| \leq C+2$. There are $O(n^{C+2})$ such sets. Thus, the problem can be computed in polynomial time.
\end{proof}

\begin{lemma}
$\WMS$ can be solved in quasi-linear time on uniform complete graphs.
\end{lemma}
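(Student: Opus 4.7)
The plan is to combine the two preceding lemmas to reduce $\WMS$ on a $k$-uniform complete graph to a selection problem on the vertex weights, and then solve that by sorting. Let the input be $(G=(V,E),w,c)$ with $G$ complete and $c \equiv k$ on every edge.

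First, I would dispose of the degenerate cases. If $k=0$, then the empty set trivially monitors $(G,c)$, so $\wms = 0$. If $k>0$ and $|V| < k+2$, then each edge $e=\{u,v\}$ has only $|V|-2 < k$ potential monitors in $V\setminus\{u,v\}$, so no monitoring set exists and $\wms = +\infty$.

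For the main case $|V|\geq k+2$, I would invoke Lemma \ref{unweightedclique2} to conclude that any monitoring set has size at least $k+2$, and Lemma \ref{unweightedclique1} to conclude that, conversely, every subset of $V$ of size at least $k+2$ is a monitoring set of $(G,c)$. Hence the family of monitoring sets is exactly $\{S\subseteq V : |S|\geq k+2\}$. Since $w$ is nonnegative, enlarging a feasible solution cannot strictly decrease its weight, so some optimum has cardinality exactly $k+2$, and among $(k+2)$-subsets the minimum weight is attained by choosing $k+2$ vertices of smallest $w$-value (ties broken arbitrarily).

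The algorithm is therefore: sort $V$ by $w$ in $O(n\log n)$ time and output the sum of the $k+2$ smallest weights (together with the corresponding vertex set). This runs in quasi-linear time in the input size. There is essentially no obstacle beyond the feasibility bookkeeping in the boundary cases; the algorithmic content is an immediate corollary of Lemmas \ref{unweightedclique1} and \ref{unweightedclique2} together with the monotonicity of $w$ under set inclusion.
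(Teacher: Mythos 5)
Your proof is correct and follows essentially the same route as the paper's: apply Lemma \ref{unweightedclique2} for the lower bound $k+2$, Lemma \ref{unweightedclique1} for feasibility of every $(k+2)$-set, and then sort the vertices by weight and take the $k+2$ lightest. Your explicit treatment of the boundary cases $k=0$ and $|V|<k+2$ is a small tidiness improvement over the paper's version but does not change the argument.
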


\begin{proof}
Let $(G = (V, E), w, c)$ such that $G$ is a complete graph and $c$ is $l$-uniform.
By Lemma \ref{unweightedclique2}, $\ms(G, c) = C + 2$ and by Lemma \ref{unweightedclique1}, every set $S \subseteq V$ of size $C+2$ monitors $G$.
Thus, if we choose $S$ as the set of the $C + 2$ first elements in $V$ sorted by increasing weight, we obtain an optimal solution for $\WMS(G, w, c)$.
We only need to sort $V$ which can be done in time $|V| \log |V|$.
\end{proof}

The following lemma is useful to establish the connection between $\wms$ of a graph $G$ and $\wms$ of 
its $2$-connected components.

We denote $\wms(G_1, w, c | u) = \min \{ w(S) : S \mbox{ is a monitoring set of }(G, c)\mbox{ and }u \in S\}$

\begin{lemma}\label{blockgraph-decomp}
 Let $(G = (V, E), w, c)$ be a weighted graph, $G_1=(V_1,E_1)$ and $G_2=(V_2,E_2)$ two graphs and $u\in V$ such that 
$V = V_1\cup V_2$, $E = E_1 \cup E_2$ and $V_1 \cap V_2=\{u\}$.
Let $d=\wms(G_1, w, c | u)-\wms(G_1, w, c)$.
Let $w'$ obtained from $w$ by replacing the weight of $u$ by $d$.
Then $\wms(G, w, c)=\wms(G_1, w, c)+\wms(G_2, w', c)$.
\end{lemma}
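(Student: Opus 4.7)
The overall strategy is to reduce $\wms(G,w,c)$ to a sum of independent $\wms$ values on $G_1$ and $G_2$, then to absorb the $G_1$-part into the shifted weight $w'$ placed on $u$ inside $G_2$. Write $a=\wms(G_1,w,c)$ and $A=\wms(G_1,w,c\mid u)$, so that $d=A-a\ge 0$; the target identity becomes $\wms(G,w,c)=a+\wms(G_2,w',c)$.

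I will first establish a decomposition lemma: a set $S\subseteq V$ monitors $G$ if and only if $S\cap V_1$ monitors $G_1$ and $S\cap V_2$ monitors $G_2$. Because $V_1\cap V_2=\{u\}$ and $E=E_1\cup E_2$, every edge of $G$ lies wholly in one side, and any monitor of an edge $e\in E_i$ must itself lie in $V_i$: a monitor is adjacent to both endpoints of $e$, and for an endpoint in $V_i\setminus\{u\}$ such an adjacency to a vertex outside $V_i$ would produce an edge belonging to neither $E_1$ nor $E_2$. Consequently, for any $S\subseteq V$ with $S_i=S\cap V_i$, we have $w(S)=w(S_1)+w(S_2)-w(u)\cdot\mathbf{1}[u\in S]$.

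For the lower bound $\wms(G,w,c)\ge a+\wms(G_2,w',c)$, I take an optimal $S$ for $G$ and split on whether $u\in S$. If $u\in S$, then $w(S_1)\ge A=a+d$, while by definition of $w'$ we have $w'(S_2)=w(S_2)-w(u)+d$; substituting gives $w(S)\ge a+w'(S_2)\ge a+\wms(G_2,w',c)$. If $u\notin S$, then $w(S_1)\ge a$ and $w'(S_2)=w(S_2)$, and the bound follows immediately.

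For the reverse inequality, I take an optimal $T_2$ for $(G_2,w',c)$ and couple it with a suitable $T_1$. If $u\in T_2$, I choose $T_1$ achieving $A$ (so $u\in T_1$); then $T_1\cap T_2=\{u\}$ and $S=T_1\cup T_2$ satisfies $w(S)=A+w(T_2)-w(u)=(a+d)+(w'(T_2)-d)=a+\wms(G_2,w',c)$. If $u\notin T_2$, I choose any $T_1$ of weight $a$; then $T_1\cap T_2=\emptyset$ (whether or not $u\in T_1$), so $w(S)=a+w(T_2)=a+w'(T_2)$, and $S$ still monitors $G$ because adding $u$ to $T_2$ (in the case $u\in T_1$) preserves the monitoring property since supersets of monitoring sets remain monitoring sets. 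The main bookkeeping subtlety, and the reason the identity closes exactly, is that the chosen shift $w'(u)=d=A-a$ is precisely what cancels the $-w(u)$ inclusion-exclusion term when both halves use $u$.
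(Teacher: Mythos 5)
Your proof is correct and follows essentially the same route as the paper's: both directions are obtained by restricting an optimal solution of $G$ to $V_1$ and $V_2$ (resp.\ gluing optimal solutions of the two subproblems), with a case split on whether $u$ belongs to the set, and the shift $d=\wms(G_1,w,c\mid u)-\wms(G_1,w,c)$ cancelling the double-counting of $w(u)$. The only difference is that you explicitly justify the underlying decomposition fact that every monitor of an edge of $E_i$ lies in $V_i$ (so $S$ monitors $G$ iff $S\cap V_i$ monitors $G_i$ for $i=1,2$), which the paper uses tacitly; this is a welcome addition rather than a divergence.
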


\begin{proof}
Let $S_1, S'_1, S_2$ be optimal solutions of $\WMS(G_1, w, c)$, $\WMS(G_1,w,c|u)$, $\WMS(G_2, w', c)$ respectively.

We first prove $\wms(G, w, c)\leq \wms(G_1, w, c)+\wms(G_2, w', c)$:
if $u \notin S_2$ then $S_1 \cup S_2$ is a solution of $\WMS(G, w, c)$ having weight $w(S_1)+w'(S_2)$.
If $u \notin S_2$ then $S'_1 \cup S_2$ is a solution of $\WMS(G, w, c)$ having weight $w(S'_1)+w(S_2)-d = w(S_1)+w'(S_2)$.
Thus we have $\wms(G, w, c) \leq \wms(G_1, w, c)+\wms(G_2, w', c)$.

Now we prove $\wms(G, w, c) \geq \wms(G_1, w, c)+\wms(G_2, w', c)$:
let $S^*$ be an optimal solution of $\WMS(G, w, c)$. We have $S_1^*=S^* \cap V_1$ and $S_2^*=S^* \cap V_2$
are solutions of $\WMS(G_1, w, c)$ and $\WMS(G_2, w', c)$ respectively.
We have to consider two cases:

$u \notin S^*$: 
  We have  $w(S_1^{*})\geq w(S_1)$ and $w_2'(S_2^{*})\geq w_2'(S_2)$ by optimality of $S_1$ and $S_2$.
  Since $w(S^*)=w(S_1^*)+w(S_2^*)$, $w(S^*)\geq w(S_1)+w(S_2)$.

$u \in S^{*}$ :
  This implies that $w(S^*)=w(S_1^*)+w'(S_2^*)-d$.
  Since $w'(S_2^*) \geq w(S_2)$ and $w(S_1^*)\geq w(S_1')$, then  
  $$w(S^*)\geq w(S_1')+w_2'(S_2)-d = w(S_1)+w_2'(S_2)$$ 
  Consequently we have 
  $\wms(G, w, c)\geq \wms(G_1, w, c)+\wms(G_2, w', c)$.
  This completes the proof of the lemma.
\end{proof}


\begin{theorem}
The two statements hold:
\begin{enumerate}
\item
$\WMS$ can be solved in polynomial time on $C$-bounded weighted block graphs.
\item 
$\WMS$ can be solved in quasi-linear time for block graphs $(G = (V, E), w, c)$ where $c$ is uniform.
\end{enumerate}
\end{theorem}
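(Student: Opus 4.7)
The plan is to reduce the problem on block graphs to repeated applications of Lemma \ref{blockgraph-decomp} on individual blocks, each of which is a clique. First, build the block-cut tree of $G$ in linear time and root it at an arbitrary block. We then process blocks in post-order: at each step, pick a leaf block $B$ of the current block-cut tree, whose unique articulation point shared with the rest of the graph is some $u$. Splitting $G$ along $u$ into $G_1 = B$ and $G_2 = G - (B \setminus \{u\})$ puts us exactly in the setting of Lemma \ref{blockgraph-decomp}. We compute $\wms(B, w, c)$ and $\wms(B, w, c \mid u)$, set $d := \wms(B, w, c \mid u) - \wms(B, w, c)$, add $\wms(B, w, c)$ to a running total, overwrite $w(u) \leftarrow d$, and delete $B$ from the tree. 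After finitely many steps we are left with a single block and just solve $\WMS$ on it directly. Correctness follows by induction on the number of blocks, using Lemma \ref{blockgraph-decomp} at each step.

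For statement (1), each block $B$ is a weighted complete graph with $C(B, c) \leq C$. Lemma \ref{wms-cbounded-complete} already gives a polynomial time algorithm for $\wms(B, w, c)$; inspecting its proof, it enumerates all vertex subsets of size at most $C+2$ and keeps the lightest one that monitors, so restricting the enumeration to subsets containing $u$ yields $\wms(B, w, c \mid u)$ in the same $O(|B|^{C+2})$ time. Summing over all blocks gives a polynomial bound in $n$ for fixed $C$.

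For statement (2), if $c$ is $k$-uniform then Lemmas \ref{unweightedclique1} and \ref{unweightedclique2} say that, whenever $|B| \geq k+2$, a subset $S \subseteq B$ monitors $(B,c)$ if and only if $|S| \geq k+2$; if $|B| < k+2$ and $k > 0$ then $(B,c)$ has no monitoring set, so we return $+\infty$ and we are done. Hence $\wms(B, w, c)$ is just the sum of the $k+2$ smallest values of $w$ over $B$, and $\wms(B, w, c \mid u) = w(u) + (\text{sum of the $k+1$ smallest values of $w$ on } B \setminus \{u\})$. Both can be computed in $O(|B| \log |B|)$ by sorting (or $O(|B|)$ by selection). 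Using the standard identity $\sum_B |B| = n + (\text{number of blocks}) - 1 = O(n)$ for a connected block graph, the total cost across all blocks is $O(n \log n)$, which is quasi-linear; the block-cut tree itself is built in linear time.

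The only real subtlety is verifying the decomposition step: we must check that the ``$u$-forced'' value $\wms(B, w, c \mid u)$ is exactly what Lemma \ref{blockgraph-decomp} needs, and that iterating the lemma is legitimate because after one decomposition the residual $(G_2, w', c)$ is still a block graph with strictly fewer blocks. The rest is bookkeeping together with the standard size bound on blocks.
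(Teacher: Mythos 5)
Your proposal is correct and follows essentially the same route as the paper: peel off leaf blocks of the block-cut tree, apply Lemma \ref{blockgraph-decomp} at the shared articulation point, and solve each clique block via Lemma \ref{wms-cbounded-complete} (suitably restricted to sets containing $u$) in the $C$-bounded case, or via the sorted-weight argument from Lemmas \ref{unweightedclique1} and \ref{unweightedclique2} in the uniform case. You merely spell out the iteration, the $|B|<k+2$ infeasibility case, and the $\sum_B |B| = O(n)$ running-time bookkeeping that the paper leaves implicit.
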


\begin{proof}
Without loss of generality, we can assume that $G$ is connected.
We will prove the first statement. The proof of the second statement is similar.
Let $(G = (V, E), w, c)$ be a $C$-bounded weighted block graph.
We first compute the block-cut tree $T$ of $G$. This can be done in linear time \cite{hopcroft1973algorithm}.
Then, we choose a clique $V_1$ that corresponds to a leaf of $T$ and $u$ the articulation point that is neighbor of $V_1$ in $T$.
Let $G_1 = (V_1, E_1) = G[V_1]$ and $G_2 = (V_2, E_2) = G[(V \setminus V_1) \cup \{u\}]$.
$G_2$ is also a block graph. Thus, we can apply Lemma \ref{blockgraph-decomp}.
It suffices to compute $\wms(G_1, w, c)$, $\wms(G_1, w, c | u)$ and $\wms(G_2, w', c)$.
$\wms(G_1, w, c)$ can be computed in polynomial time by using Lemma \ref{wms-cbounded-complete}.
Proof of Lemma \ref{wms-cbounded-complete} can be easily modified to compute $\wms(G_1, w, c | u)$.
$\wms(G_2, w', c)$ can be computed by induction.
\end{proof}

\section{PTAS for the $\WMS$ problem in weighted complete graphs}
In this section, we study the approximation complexity of the weighted monitoring set 
problem in vertex-weighted complete graphs.\\

\begin{theorem}
There exists a PTAS for $\WMS$ on complete graphs.
\end{theorem}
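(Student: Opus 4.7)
Given $\epsilon > 0$, the plan is to design a $(1+\epsilon)$-approximation by splitting on the value $C = \max_{e \in E} c(e)$. Set $\alpha = \lceil 2/\epsilon \rceil$.

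If $C \le \alpha$, the instance is $C$-bounded with $C$ depending only on $\epsilon$, so Lemma~\ref{wms-cbounded-complete} solves it exactly in polynomial time. Assume therefore that $C > \alpha$. By Lemma~\ref{unweightedclique1}, every optimal monitoring set $S^*$ satisfies $|S^*| \in \{C, C+1, C+2\}$, and every size-$(C+2)$ subset of $V$ is already feasible. I would enumerate every subset $T \subseteq V$ with $|T| = \alpha$ (an $O(n^{\alpha})$-size collection, polynomial for fixed $\epsilon$). For each $T$, form the candidate $S_T = T \cup R_T$, where $R_T$ consists of the $C+2-\alpha$ lightest vertices of $V \setminus T$; each $S_T$ has size $C+2$ and is therefore automatically feasible, and the algorithm returns the minimum-weight candidate.

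The approximation analysis rests on a pigeonhole estimate applied to the intended guess. Let $T^* \subseteq S^*$ be the $\alpha$ heaviest vertices of $S^*$ (well-defined since $|S^*| \ge C > \alpha$). From $\alpha \cdot \min_{t \in T^*} w(t) \le w(T^*) \le w(S^*)$ one obtains $\min_{T^*} w \le w(S^*)/\alpha$, so every vertex of $R^* := S^* \setminus T^*$ has weight at most $w(S^*)/\alpha$. For the iteration $T = T^*$, the extension $R_{T^*}$ consists of the $C+2-\alpha$ lightest vertices of $V \setminus T^*$. A weight-comparison argument—pairing $R_{T^*}$ against $R^*$ together with the at most $(C+2) - |S^*| \le 2$ additional vertices—aims to show $w(R_{T^*}) \le w(R^*) + 2 \cdot w(S^*)/\alpha$, whence $w(S_{T^*}) \le w(T^*) + w(R^*) + 2w(S^*)/\alpha = w(S^*)(1 + 2/\alpha) \le (1+\epsilon) w(S^*)$.

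The hard part will be rigorously justifying the weight-comparison step: the $\le 2$ vertices the algorithm is forced to add beyond what matches $R^*$ need to be shown to have weight at most $\min_{T^*} w$. This is automatic when $V \setminus S^*$ contains enough vertices lighter than $\min_{T^*} w$, but it can fail when $|S^*| < C+2$ and all vertices outside $S^*$ happen to be heavy. My fallback is to enrich the algorithm to also iterate over target sizes $s \in \{C, C+1\}$ for each $T$, computing (by a feasibility-preserving greedy or local-swap extension) a minimum-weight size-$s$ monitoring set containing $T$. The extra feasibility constraints then reduce to independence conditions in the subgraph of edges of weight $C$ (respectively $\{C-1, C\}$), but the same pigeonhole bound $\min_{T^*} w \le w(S^*)/\alpha$ still controls the cost of suboptimal extensions and drives the $(1+\epsilon)$ guarantee.
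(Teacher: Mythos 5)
Your Case $C \le \alpha$ is fine and coincides with the paper's first case. The problem is the case $C > \alpha$, and it is exactly the one you flag yourself: the (up to two) padding vertices need not be cheap, and no version of the pigeonhole bound makes them cheap. Concretely, take $|V| = C+10$, give $C$ vertices weight $\delta$ and the remaining ten weight $W \gg C\delta$, and put a single edge $e_0$ with $c(e_0)=C$ between two heavy vertices (all other edges of weight $0$). The optimum is the set of the $C$ light vertices, of weight $C\delta$, but every candidate $S_T$ you produce has size exactly $C+2$ and therefore contains at least two heavy vertices, so the returned weight is at least $2W$ and the ratio is unbounded; minimizing over all $T$ does not help because \emph{every} candidate is too large. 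The fallback does not close the gap: for a target size $s \in \{C, C+1\}$, feasibility means avoiding all endpoints of weight-$C$ edges and meeting each weight-$(C-1)$ edge in at most one vertex, i.e.\ you must find a minimum-weight independent set of prescribed size in an arbitrary graph --- precisely the encoding the paper uses in Theorem~\ref{np-complete-on-cliques} to show that $\MS$ on cliques is equivalent to \textsc{IndependentSet}. So no polynomial ``greedy or local-swap extension'' can be guaranteed even to return a feasible size-$s$ set, and your pigeonhole estimate only bounds weights \emph{inside} $S^*$, not the cost of whatever the heuristic is forced to pick outside it. (You also omit the infeasible case $|V| < C+2$, where the algorithm must report that no monitoring set exists.)

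The paper sidesteps all of this by enumerating around the globally cheapest vertices rather than around a guessed piece of \opt: let $S_{first}$ be the $C+2$ lightest vertices and enumerate every set $S$ with $C \le |S| \le C+2$ and $|S \setminus S_{first}| \le k$, testing feasibility of each (there are $O(C^k n^k)$ such sets and feasibility of a given set is checkable in polynomial time). If \opt{} lies in this family the algorithm is exact --- which is what handles the counterexample above, where $\opt \subseteq S_{first}$. Otherwise \opt{} contains more than $k$ vertices outside $S_{first}$, each at least as heavy as every vertex of $S_{first}$, and since $|S_{first} \setminus \opt| \le |\opt \setminus S_{first}| + 2$, a short chain of inequalities shows the always-feasible $S_{first}$ itself has weight at most $\frac{k+2}{k} w(\opt) \le (1+\epsilon)w(\opt)$. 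To salvage your scheme you would need to enumerate the part of \opt{} lying \emph{outside} the light vertices (together with small subsets of $S_{first}$ to delete), not its heaviest part --- which is essentially the paper's construction.
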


\begin{proof}
 Fix $\epsilon > 0$ and $k=\lceil2/\epsilon\rceil$.
 Let $G=(V,E),w,c$ such that $G$ is a complete graph and $C = \max\{ c(e) : e \in E \}$.
 Let $\opt$ denote an optimal solution for $\WMS(G,w, c)$.

 We have to consider three different cases:\\
 \textbf{Case 1.} $C\leq k$ : \\
 Using Lemma \ref{unweightedclique1}, we have $|\opt|\leq C+2\leq k+2$.
 We just need to enumerate all the sets with size at most $k+2$. We can do it in polynomial time $O(n^{k+2})$.\\
 \textbf{Case 2.} $|V| < C+2$: \\
 Clearly, there exists no monitoring set for $(G, c)$ since there exists an edge $e = \{u, v\}$ such that $c(e) = C$ and $\monitors(e) < C$.\\
  \textbf{Case 3.} $C \geq k$ and $|V| \geq C+2$: \\
 Let $S_{first}$ be the set of the first $C+2$ vertices sorted in ascending order by weight $w(v)$.
 Let $\mathcal{C}$ be the set of sets $S\subseteq V$ such that $C\leq |S| \leq C+2$ and $|S \setminus S_{first}| \leq k$.
 We prove that $\mathcal{C}$ has a polynomial size. Indeed,  we have\\
$$|\mathcal{C}| \leq |\{ S \cap S_{first} : S \in \mathcal{C} \})| \times |\{ S \setminus S_{first} : S \in \mathcal{C} \})|$$
It holds

 \begin{equation}
    \begin{aligned}
|\{ S \cap S_{first} : S \in \mathcal{C} \})| = \sum_{i = C}^{C+2} \sum_{j = 0}^k |\{ S \cap S_{first} : S \in V \wedge |S| = i \wedge |S \setminus S_{first}| = j\})|\\
\end{aligned}
\end{equation}

\begin{equation}
    \begin{aligned}
      = \sum_{i = C}^{C+2} \sum_{j = 0}^k \binom{C+2} {i - j} \leq O(C^k)\\
\end{aligned}
\end{equation}

Since $|S \setminus S_{first}| \leq k$ for every $S \in \mathcal{C}$, it holds $|\{ S \setminus S_{first} : S \in \mathcal{C} \})| \leq O(n^k)$.
Thus $|\mathcal{C}| = O(C ^ k n^k)$ is polynomial in $|V|$.
 
The algorithm consists to enumerate all the sets in $\mathcal{C}$ and take a solution of minimum weight.
This can be done in polynomial time. We distinguish two subcases as follows: \\
 \textbf{Case 3.a.} $\opt \in \mathcal{C}$ : \\
 Clearly, the algorithm returns an optimal solution.\\
 \textbf{Case 3.b.} $\opt \notin \mathcal{C}$ : \\
Notice that $S_{first}$ is a (non necessary optimal) solution by Lemma \ref{unweightedclique1} and the algorithm returns a solution $S$ such that $w(S) \leq w(S_{first})$.    
We will prove that $w(S_{first}) \leq (1 + \epsilon) w(\opt)$.
Let $a_1,...,a_l$ denote the vertices in $\opt \cap S_{first}$.
Let $b_1,...,b_m$ denote the vertices in $\opt \setminus S_{first}$.
Let $c_1,...,c_n$ denote the vertices in $S_{first} \setminus \opt$ sorted in ascending order by weight $w(v)$.
Since $|\opt \setminus S_{first}| \geq k$, we have $m \geq k$.
 In the following, we will bound the approximation ratio of the solution:
 
 \begin{equation} \label{approx1}
    \begin{aligned}
\frac{w(S_{first})}{w(\opt)}= \frac{w(a_1)+...+w(a_l)+w(c_1)+...+w(c_n)}{w(a_1)+...+w(a_l)+w(b_1)+...+w(b_m)} \\
\end{aligned}
\end{equation}

 \begin{equation} \label{approx2}
    \begin{aligned}
\leq \frac{w(c_1)+...+w(c_n)}{w(b_1)+...+w(b_m)}  \\
\end{aligned}
\end{equation}

 \begin{equation} \label{approx3}
    \begin{aligned}
\leq \frac{n.w(c_n)}{m.w(c_n)}= \frac{n}{m}\\
\end{aligned}
\end{equation}

 \begin{equation} \label{approx4}
    \begin{aligned}
 \leq \frac{m+2}{m} \\
\end{aligned}
\end{equation}

 \begin{equation} \label{approx5}
    \begin{aligned}
 \leq \frac{k+2}{k}=1+\frac{2}{k} \\
\end{aligned}
\end{equation}

 \begin{equation} \label{approx6}
    \begin{aligned}
\leq  1+\frac{2}{\lceil \frac{2}{\epsilon} \rceil} \leq 1+\epsilon \\
\end{aligned}
\end{equation}

In (\ref{approx1}), we use the fact that if $a,b,c>0$ and $b\geq c$ 
then $ \frac{a+b}{a+c}\leq \frac{b}{c} $. Since $w(\opt)\leq w(S_{first})$, we obtain
$w(c_1)+...+w(c_n) \geq w(b_1)+...+w(b_m)$.
Thus, we get (\ref{approx2}).
We obtain (\ref{approx3}) since $w(c_i)\leq w(c_n)$ for any $i \in [1, n]$ and $w(b_i) \geq w(c_n)$ for any $i \in [1, m]$.
To get (\ref{approx4}), we use the property that $|\opt|\geq C$ and $|S_{first}|=C+2$.
Since $m\geq k$ and $k=\lceil 2/\epsilon \rceil$, the rest follows.
 \end{proof}

\section{Interval graphs}
In this section, we give a polynomial algorithm for computing $\WMS$ on weighted interval graphs. This algorithm uses dynamic programming.
First, we introduce some definitions.

A graph $G = (V, E)$ is an interval graph if there exists $|V|$ intervals $(I_i)_{i \in V} = ([a_i, b_i])_{i \in V}$ of the real line such that
$\{i, j\} \in E$ if and only if $I_i \cap I_j \neq \emptyset$ for every distinct vertices $i, j \in V$.
We say that $(I_i)_{i \in V}$ is a realization of $G$.
Without loss of generality, we can assume that there are no intervals $I_i$ and $I_j$ that have a common extremity.

Given an interval graph $G=(V, E)$ and a realization $(I_i)_{i \in V}$,
we define a total order $<_L$ (resp. $<_R$) over $V$ such that $i <_L j$ (resp. $i <_R j$) if $a_i < a_j$ (resp. $b_i < b_j$).

The following definition is a refinement of the nice tree decomposition introduced by Kloks \cite{kloks94}

\begin{definition}\cite{nicepath-decomposition}
Let $G = (V, E)$ be an interval graph and $(I_i)_{i \in V}$ be a realization of $G$.
A nice path decomposition of $G$ is a sequence of sets of vertices $B_0, \ldots B_l$ such that
\begin{itemize}
\item all sets $B_i$ are cliques of $G$;
\item every edge $e \in E$ appears in a set $B_i$,
\item for every vertex $v \in E$, the set of indices $i$ such that $v \in B_i$ is a segment of $[0, l]$.
\item $B_0 = \emptyset$ and $B_l = \emptyset$;
\item For every $i \in [1, l]$,
\begin{itemize}
\item
$B_{i} = B_{i-1} \cup \{v\}$ ($i$ introduces the vertex $v$)
\item
or $B_{i} = B_{i-1} \setminus \{ v \}$ ($i$ forgets the vertex $v$).
\end{itemize}
\item the order in which vertices are introduced corresponds to $<_L$
\item the order in which vertices are forgotten corresponds to $<_R$
\end{itemize}
\end{definition}

\begin{lemma}\cite{nicepath-decomposition}
Let $G = (V, E)$ be an interval graph and $(I_i)_{i \in V}$ be a realization of $G$.
Then $G$ has a nice path-decomposition that can be computed in linear time.
\end{lemma}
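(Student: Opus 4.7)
The plan is to construct the nice path-decomposition by performing a single left-to-right sweep over the $2|V|$ interval endpoints. Let $p_1 < p_2 < \cdots < p_{2|V|}$ denote the sorted list of all endpoints, which are pairwise distinct by our general-position assumption on the realization. I would set $B_0 = \emptyset$ and, for $i = 1, \ldots, 2|V|$, define $B_i = B_{i-1} \cup \{v\}$ if $p_i = a_v$ is a left endpoint, and $B_i = B_{i-1} \setminus \{v\}$ if $p_i = b_v$ is a right endpoint. This yields the candidate sequence $B_0, B_1, \ldots, B_{2|V|}$ of length $\ell = 2|V|$.

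Then I would verify the six required properties one by one, each of them essentially immediate from the construction. The emptiness $B_0 = B_{\ell} = \emptyset$ is clear since every introduction of $v$ is eventually matched by its forgetting. The introduce/forget form at each step holds directly by definition. For each vertex $v$, the set of indices $i$ with $v \in B_i$ is exactly the contiguous segment running from the left-endpoint event of $v$ up to just before its right-endpoint event. Each bag $B_i$ is a clique, because every $I_v$ with $v \in B_i$ contains the point $(p_i + p_{i+1})/2$, so all intervals indexed by $B_i$ pairwise intersect. For any edge $\{u,v\} \in E$, the two intervals $I_u, I_v$ overlap, so $\max(a_u, a_v)$ lies in both, and both vertices sit in the bag immediately after that endpoint is processed. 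Finally, by construction, introductions occur precisely in increasing order of $a_v$ (i.e.\ in the $<_L$ order) and forgettings in increasing order of $b_v$ (i.e.\ in the $<_R$ order).

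The main obstacle is the linear-time claim, since a priori sorting $2|V|$ real numbers requires $\Theta(|V| \log |V|)$ comparisons. I would deal with this by using the standard observation that any interval graph admits an equivalent realization with integer endpoints in $\{1,2,\ldots,2|V|\}$, obtained from the given realization by replacing each endpoint with its rank in the sorted order; relative order (and hence the graph structure) is preserved, and the sweep then runs in $O(|V|)$ time using bucket sort. Alternatively, if the input realization is presumed to come with endpoints already sorted (as is customary in algorithmic work on interval graphs), the single sweep is itself manifestly linear, and no additional work is needed.
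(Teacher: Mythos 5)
Your construction is correct and is the standard one: a left-to-right sweep over the $2|V|$ sorted endpoints, introducing at left endpoints and forgetting at right endpoints, with all six properties following as you describe (the clique property from the common point in $[p_i,p_{i+1}]$, edge coverage at $\max(a_u,a_v)$, and the $<_L$/$<_R$ orderings by construction). Note that the paper does not prove this lemma at all --- it is imported verbatim from the cited reference --- so there is no in-paper argument to compare against; your write-up supplies exactly the proof one would expect that reference to contain. Your handling of the linear-time claim is the right instinct: the only subtlety is the sorting step, and either assuming a sorted (or integer-normalized) realization, as is standard and as the linear-time recognition algorithms for interval graphs already provide, disposes of it.
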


For the next lemmas, we consider an interval graph $G = (V, E)$
and a nice path-decomposition $B_0, \ldots, B_l$ of $G$.
Moreover, we introduce the following notations.
For $i \in [0, l]$, $F_i$ is the set of vertices appearing in some set $B_j$, $j < i$, but not in $B_i$.
$V_i = F_i \cup B_i$ and $G_i = G[V_i]$.

A set $S \subseteq V_i$ is an $i$-partial solution if every edge $e$ in $G_i$ that has an extremity in $F_i$ is $c(e)$-monitored by $S$.
The $i$-representant $W$ of $S \subseteq V_i$, denoted by $\repr_i(S)$,
contains exactly the $C+2$ greatest vertices in $S \cap N[B_i]$ w.r.t. $<_R$ or is $S \cap N[B_i]$ if $|S \cap N[B_i]| < C+2$.
We say that $S$ extends $W$ if $W$ is the $i$-representant of $S$.

We denote by $\mathcal{F}^*_i$ the set of $i$-representants of
$i$-partial solutions.
$w_i^*: \mathcal{F}^*_i \rightarrow \mathbb{Q}^+$ 
is a function  such that $w_i^*(W) = \min\{ w(S) : S \textrm{ is an }i\text{-partial solution that extends }W\}$.

Before presenting the algorithm, we introduce two lemmas. The second is the key of the algorithm.

\begin{lemma}\label{interval-key-lemma1}
Let $u \in B_i$, $v_1, v_2 \in V_i$ such that $v_1 <_R v_2$ and $v_1 \in N[u]$.
Then $v_2 \in N[u]$.
\end{lemma}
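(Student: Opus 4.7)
The plan is to unpack the interval meaning of membership in $B_i$ and $F_i$, and then split into cases according to whether $v_2$ is still alive in $B_i$ or has been forgotten. The key structural observation is that, in a nice path decomposition built from the realization $(I_j)_{j \in V}$, the order in which vertices are forgotten matches $<_R$; consequently, for any $x \in B_i$ and $y \in F_i$, one has $b_y < b_x$.

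First I would dispose of the easy case $v_2 \in B_i$: since every bag is a clique and $u \in B_i$, either $v_2 = u$ or $\{u, v_2\} \in E$, so $v_2 \in N[u]$ immediately.

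It remains to treat $v_2 \in F_i$. Here the goal is to verify $I_u \cap I_{v_2} \neq \emptyset$, i.e., $a_u \le b_{v_2}$ and $a_{v_2} \le b_u$. The second inequality is free from the structural observation above: since $v_2 \in F_i$ and $u \in B_i$, we have $b_{v_2} < b_u$, hence $a_{v_2} \le b_{v_2} < b_u$. For the first inequality I would first argue that $v_1 \neq u$: if $v_1 = u$, then $b_u = b_{v_1} < b_{v_2}$, so $v_2$ cannot have been forgotten while $u$ is still alive, contradicting $v_2 \in F_i$. Hence $v_1 \neq u$, and $v_1 \in N[u]$ means $I_{v_1} \cap I_u \neq \emptyset$, which in particular gives $a_u \le b_{v_1}$. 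Combining with $b_{v_1} < b_{v_2}$ yields $a_u \le b_{v_2}$, completing the proof.

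The statement is short and essentially a bookkeeping exercise, so I do not expect a real obstacle. The only subtlety worth spelling out is the forget-order fact $b_y < b_x$ for $y \in F_i$, $x \in B_i$, and the mild case analysis around $v_1 = u$; once those are in place, the two inequalities $a_u \le b_{v_2}$ and $a_{v_2} \le b_u$ drop out.
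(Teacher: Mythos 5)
Your proof is correct and follows essentially the same interval-endpoint bookkeeping as the paper, with the same key chain $a_u \le b_{v_1} < b_{v_2}$. The only difference is cosmetic: the paper avoids your case split on $v_2 \in B_i$ versus $v_2 \in F_i$ (and the $v_1 = u$ digression) by observing directly that $u \in B_i$ and $v_2 \in V_i$ already force $a_{v_2} < b_u$.
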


\begin{proof}
Let $[a_u, b_u]$, $[a_{v_1}, b_{v_1}]$ and $[a_{v_2}, b_{v_2}]$ the intervals that represent $u$, $v_1$ and $v_2$ respectively in the realization
of $G$.
Since $u \in B_i$ and $v_2 \in V_i$, $b_u > a_{v_1}$ and $b_u > a_{v_2}$.
Since $v_1 \in N[u]$, we have $a_u < b_{v_1}$ and since $v_1 <_R v_2$, we have $a_u < b_{v_2}$.
Thus $[a_u, b_u] \cap [a_{v_2}, b_{v_2}] \neq \emptyset$. Consequently, $v_2 \in N[u]$.
\end{proof}

\begin{lemma}\label{interval-key-lemma2}
Let $S \subseteq V_i$, $W = \repr_i(S)$, $v_1, v_2 \in B_i$ and $\alpha \in [0, C]$.
If $\{v_1, v_2\}$ is $\alpha$-monitored by $S$
then $\{v_1, v_2\}$ is $\alpha$-monitored by $W$.
\end{lemma}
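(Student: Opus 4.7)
The plan is to extract, from the hypothesis, a set $T := \monitors(\{v_1,v_2\}) \cap S$ of size at least $\alpha$, and then show that $W$ can miss at most two of its monitors. The first observation to write down is that every monitor of $\{v_1,v_2\}$ is adjacent to both endpoints, which both lie in $B_i$; hence $T \subseteq S \cap N[B_i]$. In other words, the pool from which $W$ is selected already contains $T$ in its entirety — the only question is whether $W$ retains enough of it.

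I would then split on the size of $S \cap N[B_i]$. The easy case is $|S \cap N[B_i]| \leq C+2$: by definition $W = S \cap N[B_i] \supseteq T$, so $|W \cap \monitors(\{v_1,v_2\})| \geq |T| \geq \alpha$ and we are done. In the remaining case, $W$ is exactly the set of the $C+2$ vertices of $S \cap N[B_i]$ that are largest in the order $<_R$. If $T \subseteq W$ we are again finished immediately, so I would assume there exists some $u \in T \setminus W$. By the maximality property defining $W$, every $w' \in W$ then satisfies $u <_R w'$.

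This is where Lemma \ref{interval-key-lemma1} does the real work. Since $u$ monitors $\{v_1,v_2\}$, it is adjacent to both $v_1, v_2 \in B_i$, and both $u$ and $w'$ lie in $V_i$ with $u <_R w'$. Applying Lemma \ref{interval-key-lemma1} once with $v_1$ and once with $v_2$ yields $w' \in N[v_1] \cap N[v_2]$. Provided $w' \notin \{v_1,v_2\}$, this exactly says that $w'$ monitors $\{v_1,v_2\}$. Consequently at most two elements of $W$ can fail to lie in $\monitors(\{v_1,v_2\})$, giving $|W \cap \monitors(\{v_1,v_2\})| \geq |W| - 2 = C \geq \alpha$.

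The point I expect to have to handle carefully — and the reason behind the otherwise mysterious $+2$ in the definition of the representant — is precisely that $v_1$ or $v_2$ themselves may belong to $W$: they certainly lie in $N[B_i]$, and if they happen to belong to $S$ with large enough $<_R$ value they will appear among the top $C+2$ elements. Such vertices vacuously satisfy the neighborhood conclusion extracted from Lemma \ref{interval-key-lemma1}, yet they cannot monitor the edge $\{v_1,v_2\}$ because a monitor must form a triangle with both endpoints and hence be distinct from them. Reserving two slots for these exceptions is exactly what converts the bound $|W|-2 = C$ into the required $\geq \alpha$, using the hypothesis $\alpha \leq C$.
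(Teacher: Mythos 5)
Your proof is correct and follows essentially the same route as the paper's: reduce to the case $|S \cap N[B_i]| > C+2$, pick a monitor $u \in S \setminus W$, apply Lemma \ref{interval-key-lemma1} to each endpoint to show every element of $W$ lies in $N[v_1] \cap N[v_2]$, and conclude $|W \cap \monitors(\{v_1,v_2\})| \geq |W| - 2 = C \geq \alpha$. Your explicit treatment of the sub-case $T \subseteq W$ and of the role of $v_1, v_2$ as the two possible non-monitors in $W$ only makes the argument slightly more careful than the paper's version.
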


\begin{proof}
First, notice that $\monitors(\{v_1, v_2\}) \subseteq N[B_i]$.
If $|S \cap N[B_i]| \leq C+2$, then $W = S \cap N[B_i]$ and the lemma is trivially verified.
Now, assume that $|S \cap N[B_i]| > C+2$ and let $u \in (S \setminus W)  \cap \monitors(\{v_1, v_2\})$.
By Lemma \ref{interval-key-lemma1}, every vertex $u' \in W$ belongs to $N[v_1]$ and $N[v_2]$.
So all elements in $W$ except at most two ($v_1$ and $v_2$) belong to $\monitors(\{v_1, v_2\})$.
Thus $|\monitors(\{v_1, v_2\}) \cap W| \geq C$ and $\{v_1, v_2\}$ is $\alpha$-monitored by $W$.
\end{proof}

To solve $\WMS$ on interval graphs, a naive algorithm consists to iterate over the sets $B_i$ and to compute for each $i$ the set of $i$-partial solutions.
Unfortunately, the algorithm is non polynomial since the set of $i$-partial solutions can be exponential.
The key of the algorithm is as follows: instead of considering all the $i$-partial solutions,
we consider the representants of the $i$-partial solutions.
Since the number of representants is polynomially bounded by $|V|$, the algorithm will run in polynomial time.
Lemma \ref{interval-key-lemma2} guarantees that we don't miss solutions.
Indeed, let $S$ be an $i$-partial solution. If $i+1$ introduces the node $v$, then $S$ and $S \cup \{v\}$ are $(i+1)$-partial solutions.
If $i+1$ forgets the node $v$ then $S$ is an $(i+1)$-partial solution if and only if every forgotten edge $e$ (i.e. an edge having
$v$ as extremity and the other extremity in $B_{i+1}$) is $c(e)$-monitored by $S$.
But thanks to Lemma \ref{interval-key-lemma2}, it suffices to check that these edges are $c(e)$-monitored by $\repr_i(S)$.



We present now Algorithm 2. 

\begin{algorithm}
 \label{algo_intervalgraphs_update}
\caption{Algorithm for $\WMS$ on interval graphs}
 \begin{algorithmic}[1]
\Require {a weighted interval graph $(G, c, w)$ and a nice path decomposition $B_0, \ldots, B_l$}
\State $\mathcal{F}_0 \leftarrow \{ \emptyset \}$
\State $w_0(\emptyset) = 0$
\For{$i$ from $1$ to $l$}
 \State $\mathcal{F}_i \leftarrow \emptyset$
  \State $w_i(S) = +\infty$ for any $S$
 \If{$i$ forgets the node $v$}
    \For{$W \in \mathcal{F}_{i-1}$}
      \If{every edge $e = \{u, v\}$ with $u \in B_i$ is $c(e)$-monitored by $W$}
         \State $W' \leftarrow \repr_i(W)$
         \State $\mathcal{F}_i \leftarrow \mathcal{F}_i \cup \{W'\}$
         \State $w_i(W') \leftarrow \min \{w_i(W'), w_{i-1}(W) \}$
      \EndIf
   \EndFor
  \ElsIf{$i$ introduces the node $v$}
    \For{$W \in \mathcal{F}_{i-1}$}
       \State $W' \leftarrow \repr_i(W)$
       \State $\mathcal{F}_i \leftarrow \mathcal{F}_i \cup \{W'\}$
       \State $w_i(W') \leftarrow \min\{w_i(W'), w_{i-1}(W)\}$
       \State $W' \leftarrow \repr_i(W \cup \{v\})$
       \State $\mathcal{F}_i \leftarrow \mathcal{F}_i \cup \{W'\}$
       \State $w_i(W') \leftarrow \min\{w_i(W'), w_{i-1}(W) + w(v)\}$
    \EndFor
 \EndIf
 \EndFor
 \If{ $\mathcal{F}_l = \emptyset$}  
 \State \Return $+\infty$
 \Else
 \State     \Return $\min\{ w_l(W) : W \in \mathcal{F}_l \}$
 \EndIf
 \end{algorithmic}
\end{algorithm}

The next lemma shows that the sets $\mathcal{F}_i$ and functions $w_i$ computed by Algorithm 2 correspond to the sets $\mathcal{F}^*_i$ and functions $w^*_i$  defined previously.

\begin{lemma}\label{interval-F-and-w}
For every $i \in [0, l]$, after the run of Algorithm 2, we have
$\mathcal{F}_i = \mathcal{F}^*_i$ and $w_i(S) = w_i^*(S)$ for every $S \in \mathcal{F}_i$.
\end{lemma}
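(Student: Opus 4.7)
My plan is to prove the lemma by induction on $i$. The base case $i = 0$ is immediate: $V_0 = \emptyset$, so the unique $0$-partial solution is $\emptyset$ with representant $\emptyset$ and weight $0$, matching the initialization on lines 1--2 of Algorithm 2. For the inductive step, I assume $\mathcal{F}_{i-1} = \mathcal{F}^*_{i-1}$ and $w_{i-1}(W) = w_{i-1}^*(W)$ for every $W \in \mathcal{F}_{i-1}$, and handle the forget step and the introduce step separately.

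The heart of the proof is the following auxiliary claim, which I would establish first: for every $S \subseteq V_{i-1}$ one has $\repr_i(S) = \repr_i(\repr_{i-1}(S))$, and in the introduce case one also has $\repr_i(S \cup \{v\}) = \repr_i(\repr_{i-1}(S) \cup \{v\})$. In the forget case $B_i = B_{i-1} \setminus \{v\}$ gives $N[B_i] \subseteq N[B_{i-1}]$, and the key step is to show that the $C+2$ greatest elements of $S \cap N[B_i]$ w.r.t.\ $<_R$ all lie in $W = \repr_{i-1}(S)$. For this, note that if $x \in S \cap N[B_i]$, say $x \in N[u]$ for some $u \in B_i$, then by Lemma \ref{interval-key-lemma1} every $y \in S \cap N[B_{i-1}]$ with $y >_R x$ also satisfies $y \in N[u]$, so $y \in S \cap N[B_i]$; hence if $x$ is among the top $C+2$ of $S \cap N[B_i]$, there are at most $C+1$ elements of $S \cap N[B_{i-1}]$ above it, and $x \in W$ follows. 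In the introduce case $B_i = B_{i-1} \cup \{v\}$, the newly introduced vertex $v$ has all its neighbors in $V_{i-1}$ lying in $B_{i-1}$, so $S \cap N[B_i] = S \cap N[B_{i-1}]$ for every $S \subseteq V_{i-1}$; the representant equality then follows because taking the top $C+2$ of a set and then taking it again leaves the result unchanged.

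With this claim, both inclusions and the weight equality reduce to bookkeeping. In the forget step, every $i$-partial solution $S$ is in particular $(i-1)$-partial, so by induction its $(i-1)$-representant $W$ lies in $\mathcal{F}_{i-1}$; the filter on line 8 is correct because by Lemma \ref{interval-key-lemma2} the newly forgotten edges $\{u,v\}$ with $u \in B_i$ are $c(e)$-monitored by $S$ if and only if they are by $W$; and the representant stored on line 9 equals $\repr_i(W) = \repr_i(S)$ by the auxiliary claim. Conversely, every $W$ the algorithm retains is by induction the $(i-1)$-representant of an actual $(i-1)$-partial solution, which the filter guarantees is in fact $i$-partial. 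The introduce step is handled in the same spirit: since $F_i = F_{i-1}$, the $i$-partial solutions are exactly the $(i-1)$-partial solutions together with their extensions by $v$, both enumerated on lines 16 and 19. The weight updates on lines 11, 17, 20 match the definition of $w_i^*$ once one combines them with the induction hypothesis $w_{i-1} = w_{i-1}^*$.

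I expect the main obstacle to be the auxiliary claim in the forget case: this is exactly where the interval structure, via Lemma \ref{interval-key-lemma1}, is needed to prevent the representants from losing information when $B$ shrinks. Everything else is a careful but routine accounting of how $(i-1)$-partial solutions become $i$-partial and how their representants project.
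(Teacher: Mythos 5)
Your proposal is correct and follows essentially the same route as the paper: induction on $i$, a case split on forget/introduce steps, both inclusions between $\mathcal{F}_i$ and $\mathcal{F}^*_i$ with the matching weight inequalities, and Lemma \ref{interval-key-lemma2} to justify the line-8 filter. The one difference is that you isolate and actually prove the commutation identity $\repr_i(S) = \repr_i(\repr_{i-1}(S))$ via Lemma \ref{interval-key-lemma1}, which the paper's proof uses implicitly (writing $\repr_{i+1}(S) = \repr_{i+1}(W)$ without justification), so your version is if anything more complete.
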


\begin{proof}
We prove by induction on $i$. The property is clearly verified for $i = 0$.
Now, suppose that the property holds for $i$ and prove it for $i+1$.

$\mathcal{F}_{i+1} \subseteq \mathcal{F}^*_{i+1}$
and for each $W \in \mathcal{F}_{i+1}$, $w^*_{i+1}(W) \leq w_{i+1}(W)$:
let $W' \in \mathcal{F}_{i+1}$. We consider two cases.

$i+1$ forgets the vertex $v$:
then, $W'$ comes from some $W \in \mathcal{F}_{i}$ such that $W' = \repr(W)$, $w_{i+1}(W') = w_i(W)$
and $W'$ is added to $\mathcal{F}_{i+1}$ by Lines 9-11.
Using the induction hypothesis, $W' \in \mathcal{F}^*_i$ and $w_i(W) = w^*_i(W)$.
Let $S$ be a $i$-partial solution of weight $w(S) = w^*_i(W)$ that extends $W$.
By Line 8 of the algorithm, every edge $e = \{u, v\}$ where $u \in B_i$ is $c(e)$-monitored by $W$ and thus by $S$.
Consequently, $S$ is an $(i+1)$-partial solution with $\repr_{i+1}(S) = \repr_{i+1}(W) = W'$.
Thus, $W' \in \mathcal{F}^*_{i+1}$ and $w^*_{i+1}(W') \leq w(S) = w^*_i(W) = w_i(W) = w_{i+1}(W')$.

$i+1$ introduces the vertex $v$:
There are two possibilities.

$v \notin W'$: then $W'$ comes from some $W \in \mathcal{F}_i$ such that $W' = \repr(W)$, $w_{i+1}(W') = w_i(W)$
and $W'$ is added to $\mathcal{F}_{i+1}$ by Lines 14-16.
By induction hypothesis, $W \in \mathcal{F}^*_i$ and $w_i(W) = w^*_i(W)$.
Let $S$ be a $i$-partial solution of weight $w(S) = w^*_i(W)$ that extends $W$.
$S$ is an $(i+1)$-partial solution with $\repr_{i+1}(S) = \repr_{i+1}(W) = W'$.
Thus $W' \in \mathcal{F}^*_{i+1}$ and $w^*_{i+1}(W') \leq w(S) = w^*_i(W) = w_i(W) = w_{i+1}(W')$.

$v \in W$: $W'$ comes from some $W \in \mathcal{F}_i$ such that $W' = \repr(W+\{v\})$, $w_{i+1}(W') = w_i(W)+w(v)$
and $W'$ is added to $\mathcal{F}_{i+1}$ by Lines 17-19.
Let $S$ be a $i$-partial solution of weight $w(S) = w^*_i(W)$ that extends $W$.
$S' = S \cup \{v\}$ is an $(i+1)$-partial solution with $\repr_{i+1}(S \cup \{v\}) = \repr_{i+1}(W \cup \{ v \}) = W'$.
Thus $W' \in \mathcal{F}^*_{i+1}$ and $w^*_{i+1}(W') \leq w(S+\{v\}) = w^*_i(W) + w(v) = w_i(W) + w(v) = w_{i+1}(W')$.

$\mathcal{F}^*_{i+1} \subseteq \mathcal{F}_{i+1}$ and for each $W \in \mathcal{F}^*_{i+1}$, $w_{i+1}(W') \leq w^*_{i+1}(W')$:
let $W' \in \mathcal{F}^*_{i+1}$ and $S'$ be an $(i+1)$-partial solution that extends $W$ and such that $w(S') = w^*_{i+1}(W')$.
We also consider two cases:

$i+1$ forgets the vertex $v$:
then $S'$ is an $i$-partial solution. Let $W = \repr_{i}(S')$. Then $W' = \repr_{i+1}(W)$.
Using the induction hypothesis, $W \in \mathcal{F}_i$ and $w_i(W) = w^*_i(W)$.
By definition of a $(i+1)$-partial solution, every edge $e = \{u, v\}$ with $u \in B_i$ is $c(e)$-monitored by $S'$.
By applying Lemma \ref{interval-key-lemma2}, these edges are also $c(e)$-monitored by $W$.
Thus, Line 8 of the algorithm succeeds and $W'=\repr_{i+1}(W)$ is added to $\mathcal{F}_{i+1}$ and by Line 10
$w_{i+1}(W') \leq w_i(W) = w^*_i(W) = w(S') = w^*_{i+1}(W')$.

$i+1$ introduces the vertex $v$.
There are two possibilities.

$v \notin S'$: 
then $S'$ is an $i$-partial solution. Let $W = \repr_{i}(S')$.
Using the induction hypothesis, $W \in \mathcal{F}_i$ and $w_i(W) = w^*_i(W)$.
Thus, $W'=\repr_{i+1}(W)$ is added to $\mathcal{F}_{i+1}$ by Line 14 and by Line 15
$w_{i+1}(W') \leq w_i(W) = w^*_i(W) = w(S') = w^*_{i+1}(W')$.

$v \in S'$:
let $S = S' - v$. Then $S'$ is an $i$-partial solution.
Let $W = \repr_{i}(S)$.
Using the induction hypothesis, $W \in \mathcal{F}_i$ and $w_i(W) = w^*_i(W)$.
Thus, $W'=\repr_{i+1}(S\cup\{v\}) = \repr_{i+1}(W\cup\{v\})$ is added to $\mathcal{F}_{i+1}$ by Line 18-19
and $w_{i+1}(W') \leq w_i(W)+w(v) = w^*_i(W)+w(v) = w(S)+w(v) = w(S') = w^*_{i+1}(W')$.
\end{proof}

\begin{theorem}\label{running-time-theorem}
$\WMS$ on $C$-bounded weighted interval graphs is in $\PTime$.
More precisely, it can be solved in time $O(|V|^{C+4})$.
\end{theorem}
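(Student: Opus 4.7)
The plan is to establish correctness of Algorithm 2 via Lemma \ref{interval-F-and-w} and then carefully bound the running time by $O(|V|^{C+4})$. For correctness, note that since $B_l = \emptyset$, we have $V_l = F_l = V$, so an $l$-partial solution is exactly a monitoring set of $(G, c)$. Moreover $N[B_l] = \emptyset$, so $\repr_l(S) = \emptyset$ for every $S \subseteq V_l$. By Lemma \ref{interval-F-and-w}, $\mathcal{F}_l = \mathcal{F}^*_l$ is therefore either $\{\emptyset\}$ (if at least one monitoring set exists) or $\emptyset$, and in the former case $w_l(\emptyset) = w^*_l(\emptyset) = \min\{w(S) : S \text{ monitors } (G, c)\} = \wms(G, w, c)$. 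Hence the value returned on Lines 26--28 coincides with $\wms(G, w, c)$ (with the convention that the minimum over the empty set is $+\infty$).

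For the running-time analysis, I would first recall that a nice path decomposition of an interval graph on $|V|$ vertices has length $l \leq 2|V|$ (each vertex is introduced once and forgotten once) and is computable in linear time by the cited lemma. Next, for every index $i$, an element of $\mathcal{F}_i$ is by definition a subset of $V$ of size at most $C+2$, so
\[
|\mathcal{F}_i| \;\leq\; \sum_{j=0}^{C+2} \binom{|V|}{j} \;=\; O(|V|^{C+2}).
\]

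The key step is to estimate the cost of one iteration of the outer loop, i.e.\ of processing a single $W \in \mathcal{F}_{i-1}$. Computing $\repr_i(W)$ or $\repr_i(W \cup \{v\})$ requires keeping the at most $C+3$ candidate vertices sorted by $<_R$, which (once $<_R$ has been precomputed on $V$) can be done in $O(C)$ time per update. When $i$ forgets a vertex $v$, the test on Line~8 checks at most $|B_i| \leq |V|$ edges incident to $v$, each check counting at most $|W| \leq C+2$ monitors, so the whole test takes $O(|V| \cdot C)$ time; the introduce case is cheaper. Storing $\mathcal{F}_i$ in a hash table (or a trie) keyed by the sorted tuple of vertex indices of the representant lets the $\mathcal{F}_i \cup \{W'\}$ update run in $O(C)$ expected time and guarantees that representants are not duplicated. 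Thus the work per element of $\mathcal{F}_{i-1}$ is $O(|V| \cdot C)$.

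Combining the three bounds gives total running time
\[
O\!\bigl(l \cdot |\mathcal{F}_{i-1}| \cdot |V| \cdot C\bigr)
\;=\; O\!\bigl(|V| \cdot |V|^{C+2} \cdot |V| \cdot C\bigr)
\;=\; O\!\bigl(C \cdot |V|^{C+4}\bigr),
\]
which is $O(|V|^{C+4})$ since the family is $C$-bounded. The main obstacle is making sure that the polynomial bound on $|\mathcal{F}_i|$ is actually attained: this requires that representants be stored up to equality (so that identical sets produced through different paths do not inflate the count), which is ensured by the set-union semantics on Lines 10, 15 and 18. Once this bookkeeping is fixed, the rest of the accounting is routine.
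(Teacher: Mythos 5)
Your proposal is correct and follows essentially the same route as the paper: correctness is delegated to Lemma \ref{interval-F-and-w}, the number of $i$-representants is bounded by $O(|V|^{C+2})$ since each is a set of at most $C+2$ vertices, and the per-representant work in each of the $O(|V|)$ iterations is $O(C\,|V|)$, yielding $O(|V|^{C+4})$ for $C$ bounded. Your write-up is in fact slightly more careful than the paper's on two points the paper leaves implicit — why $\mathcal{F}_l$ and $w_l$ encode $\wms(G,w,c)$, and how duplicate representants are avoided — but the underlying argument is the same.
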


\begin{proof}
Thanks to Lemma \ref{interval-F-and-w}, it is clear that Algorithm 2 is exact.
Let prove that it runs in the expected time.
The algorithm consists of a main loop that does $|V|+1$ iterations.
Within this loop, we have two possibilities: forgetting or introducing a vertex.
In the two cases, we loop over the elements of $\mathcal{F}_{i-1}$. Each step of the loop can be done
in time $O(N(B_i))$ (since $C$ is bounded) in both cases.
The size of $\mathcal{F}_{i-1}$ is bounded by $(N[B_{i-1}] \cap V_{i-1})^{C+2}$.
Therefore the time spent within a step of the main loop is $O((N[B_{i-1}]\cap V_{i-1})^{C+3})$.
Since $N[B_{i-1}] \cap V_{i-1}$ is bounded by $|V|$, Algorithm \ref{algo_intervalgraphs_update} 
runs in time $O(|V|^{C+4})$.
\end{proof}

The complexity of the algorithm can be refined in the case of unit interval graphs.

\begin{lemma}\label{interval-key-lemma3}
Let $C$ be a clique of an unit interval graph $G = (V, E)$. Then $N[C] \leq 3 \omega(G)$.
\end{lemma}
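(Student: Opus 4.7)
\medskip
\noindent\textbf{Proof proposal.}
The plan is to fix a realization of $G$ in which every interval has the same length, which I may rescale to $1$; the key feature I will exploit is that two such intervals intersect if and only if the distance between their left endpoints is less than~$1$. I will then confine the left endpoints of all vertices in $N[C]$ to a window of length at most~$3$, and cover that window by three sub-windows of length~$1$, each of which yields a clique and hence contributes at most $\omega(G)$ vertices.

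More concretely, write the interval of a vertex $v$ as $[a_v, a_v+1]$, and set $a_{\min} = \min_{v \in C} a_v$ and $b_{\max} = \max_{v \in C}(a_v+1)$. Let $v^{**}, v^{*} \in C$ realize these extrema. Since $\{v^{*},v^{**}\} \subseteq C$ is a clique their intervals meet, so $b_{\max} - 1 = a_{v^{*}} \le a_{v^{**}} + 1 = a_{\min}+1$, giving $b_{\max} \le a_{\min}+2$. From here I argue that for every $u \in N[C]$, the left endpoint $a_u$ lies in $[a_{\min}-1,\, a_{\min}+2]$: if $a_u < a_{\min}$ then $I_u$ can meet some $I_v$ with $v\in C$ only if $a_u+1 \ge a_{\min}$; and if $a_u > a_{\min}$ then $I_u$ must satisfy $a_u \le b_{\max} \le a_{\min}+2$.

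Now partition this window of length $3$ into three half-open pieces $J_1 = [a_{\min}-1, a_{\min})$, $J_2 = [a_{\min}, a_{\min}+1)$, $J_3 = [a_{\min}+1, a_{\min}+2]$, each of diameter at most~$1$. Any two vertices $u, u' \in N[C]$ whose left endpoints lie in the same $J_k$ satisfy $|a_u - a_{u'}| \le 1$, so their unit-length intervals meet, so they are adjacent. Hence $\{u \in N[C] : a_u \in J_k\}$ is a clique in $G$ and has at most $\omega(G)$ elements. Summing over $k = 1,2,3$ gives $|N[C]| \le 3\omega(G)$.

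The only genuinely delicate point is the bound $b_{\max} \le a_{\min}+2$: it depends on $C$ being a clique so that its two extreme intervals actually intersect. Boundary issues at the endpoints of $J_3$ are harmless because the realization was assumed at the outset of the section to have no two intervals sharing an extremity, so the closed/half-open distinction in the partition does not cause any overcounting.
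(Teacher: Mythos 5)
Your proof is correct. Both you and the paper cover $N[C]$ by three cliques anchored at the two extreme members of $C$, but the decompositions are genuinely different. The paper works order-theoretically: using the property $u \leq_L v \Leftrightarrow u \leq_R v$ of proper interval representations, it writes $N[C] = N_{\leq}[v_{min}] \cup (N_{\geq}[v_{min}] \cap N_{\leq}[v_{max}]) \cup N_{\geq}[v_{max}]$ for the $\leq_L$-minimal and $\leq_L$-maximal vertices of $C$, and asserts (with an ``easily seen'') that each of the three parts is a clique. You instead work metrically with an explicit equal-length realization: after rescaling, adjacency becomes $|a_u - a_v| \leq 1$ on left endpoints, the clique condition on $C$ confines all left endpoints of $N[C]$ to a window of length $3$, and three unit sub-windows give the three cliques. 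Your version makes both the covering and the clique property of each part completely explicit (a sub-window of diameter at most $1$ forces any two of its unit intervals to meet), at the cost of invoking the equal-length realization rather than mere properness; the paper's version avoids metric reasoning but leaves the verification of its three-way decomposition to the reader. One small remark: your closing worry about boundary effects in $J_3$ is unnecessary, since two left endpoints at distance exactly $1$ still yield intersecting closed intervals, so the clique claim for $J_3$ holds even without the general-position assumption.
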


\begin{proof}
Let $(I_i)_{i \in E}$ be a realization of $G$. Since $G$ is an unit interval graph, we have
$u \leq_L v \Leftrightarrow u \leq_R v$ for every $x, y \in V$.
For every vertex $v \in V$, we denote by $N_{\leq}[v]$ (resp. $N_{\geq}[v]$) the set $\{u : u \in N[v] \wedge u \leq_L v\}$
(resp. $\{u : u \in N[v] \wedge u \geq_L v\}$).
Let $v_{min}$ (resp. $v_{max}$) be the minimal (resp. maximal) vertex of $C$ w.r.t $\leq_L$.
It is easily seen that $N[C] = N_{\leq}[v_{min}] \cup (N_{\geq}[v_{min}] \cap N_{\leq}[v_{max}]) \cup N_{\geq}[v_{max}]$ and
that $N_{\leq}[v_{min}]$, $N_{\geq}[v_{min}] \cap N_{\leq}[v_{max}]$ and $N_{\geq}[v_{max}]$ are clique of $G$.
Thus $N[C] \leq 3 \omega(G)$.9

\end{proof}

\begin{theorem}
$\WMS$ can be solved in time $O(\omega(G)^{C+3}|V|)$ on $C$-bounded weighted unit interval graphs.
\end{theorem}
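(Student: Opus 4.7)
The plan is to re-run Algorithm \ref{algo_intervalgraphs_update} verbatim on a unit interval graph and simply tighten the running-time analysis of Theorem \ref{running-time-theorem} using the new structural bound from Lemma \ref{interval-key-lemma3}. Correctness follows immediately from Lemma \ref{interval-F-and-w}, since unit interval graphs are a subclass of interval graphs; so the only thing to justify is the sharper time bound.

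Recall from the proof of Theorem \ref{running-time-theorem} that the bottleneck was bounding the size of $\mathcal{F}_{i-1}$ and the time spent per element. Each representant is a subset of $N[B_{i-1}] \cap V_{i-1}$ of size at most $C+2$, so $|\mathcal{F}_{i-1}| \leq |N[B_{i-1}] \cap V_{i-1}|^{C+2}$, and processing each representant (updating at $\repr_i$, testing the monitoring condition on the forgotten edges) costs $O(|N[B_{i-1}]|)$ because $C$ is bounded. Thus one iteration of the main loop costs $O(|N[B_{i-1}] \cap V_{i-1}|^{C+3})$.

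The new ingredient is that every $B_i$ is a clique of $G$, so Lemma \ref{interval-key-lemma3} gives $|N[B_{i-1}]| \leq 3\omega(G)$, and in particular $|N[B_{i-1}] \cap V_{i-1}| \leq 3\omega(G)$. Substituting this bound in place of $|V|$ in the previous calculation, each iteration of the main loop runs in time $O(\omega(G)^{C+3})$. Summing over the $|V|+1$ iterations of the outer loop yields the claimed bound $O(\omega(G)^{C+3} |V|)$.

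There is no real obstacle beyond invoking Lemma \ref{interval-key-lemma3} at the right point; the main care is just to observe that both the size of $\mathcal{F}_{i-1}$ and the per-element work depend only on $|N[B_{i-1}]|$, not on $|V|$, so replacing the crude bound $|N[B_{i-1}]| \leq |V|$ by the clique-based bound $|N[B_{i-1}]| \leq 3\omega(G)$ propagates cleanly through the entire analysis.
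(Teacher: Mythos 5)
Your proposal is correct and follows exactly the paper's own argument: correctness is inherited from Lemma \ref{interval-F-and-w}, and the running time is obtained by substituting the bound $|N[B_{i-1}] \cap V_{i-1}| \leq 3\omega(G)$ from Lemma \ref{interval-key-lemma3} into the per-iteration cost analysis of Theorem \ref{running-time-theorem}. Nothing further is needed.
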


\begin{proof}
We refine the running time of Theorem \ref{running-time-theorem}.
Thanks to Lemma \ref{interval-key-lemma3}, we can bound $N[B_{i-1}]\cap V_{i-1}$ by $3 \omega(G)$.
Thus, we deduce that the overall running time is $O(\omega(G)^{C+3}|V|)$ in weighted unit interval graphs. 
\end{proof}

\section{Cographs}
Let $G_1 = (V_1, E_2)$ and $G_2 = (V_2, E_2)$ such that $V_1 \cap V_2 = \emptyset$.
The join of $G_1$ and $G_2$ is the graph $G = (V_1 \cup V_2, E_1 \cup E_2 \cup \{ \{ u, v \} : u \in V_1 \wedge v \in V_2\})$.
The class of cographs is defined by induction.
\begin{itemize}
\item The graph which contains one vertex is a cograph;
\item The (disjoint) union and the join of two cographs are cographs.
\end{itemize}

\begin{lemma}\label{cographs-1}
Let $G = (V, E)$ be the join of two graphs $G_1 = (V_1, E_1)$ and $G_2 = (V_2, E_2)$.
Let $S$ be a total dominating set of $G_1$. Then, $S$ monitors all edges between $V_1$ and $V_2$.
\end{lemma}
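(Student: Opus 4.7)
The plan is to fix an arbitrary cross edge $e = \{u_1, u_2\}$ with $u_1 \in V_1$ and $u_2 \in V_2$ and to exhibit a vertex of $S$ that forms a triangle with the endpoints of $e$. Since every edge of $G$ that joins $V_1$ and $V_2$ has exactly one endpoint on each side, it suffices to handle this generic case.

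To produce the monitor, I would apply the two hypotheses in turn. First, because $S$ is a total dominating set of $G_1$ and $u_1 \in V_1$, the vertex $u_1$ has at least one neighbor $v \in S$ in the graph $G_1$; in particular $v \in V_1$ and $\{u_1, v\} \in E_1 \subseteq E$. Second, because $G$ is the join of $G_1$ and $G_2$, every vertex of $V_1$ is adjacent in $G$ to every vertex of $V_2$, so $\{v, u_2\} \in E$. Together with $\{u_1, u_2\} \in E$, this gives that $\{u_1, u_2, v\}$ is a triangle, so $v \in \monitors(e) \cap S$ and thus $S$ monitors $e$.

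There is no genuine obstacle here: the argument is a one-line unfolding of the definitions of total dominating set and of the join. The only point worth stating carefully is that the neighbor provided by total domination lies in $V_1$ (so we are entitled to use the join to get adjacency with $u_2$), which follows from $S \subseteq V_1$ and the fact that total domination is within $G_1$.
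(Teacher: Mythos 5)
Your proof is correct and follows the same argument as the paper: take the cross edge, use total domination in $G_1$ to get a neighbor of the $V_1$-endpoint inside $S$, and use the join to close the triangle with the $V_2$-endpoint. The only difference is that you make explicit the (correct) observation that the monitor lies in $V_1$, which the paper leaves implicit.
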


\begin{proof}
Let $\{u, v\}$ be an edge between $G_1$ and $G_2$ such that $u \in V_1$. Then there exists a vertex $u_1 \in S$ adjacent to $u$.
Thus,  $\{u, v\}$ is monitored by $S$ since $\{u, v, u_1\}$ is a triangle of $G$.
\end{proof}

\begin{lemma}\label{cographs-2}
Let $G = (V, E)$ be the join of two graphs $G_1 = (V_1, E_1)$ and $G_2 = (V_2, E_2)$.
Let $S$ be a monitoring set of $G$.
Then $S \cap V_1$ is a total dominating set of $G_1$ or $S \cap V_2$ is a total dominating set of $G_2$.
\end{lemma}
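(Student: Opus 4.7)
The plan is to argue by contradiction. Suppose $S\cap V_1$ is not a total dominating set of $G_1$ and simultaneously $S\cap V_2$ is not a total dominating set of $G_2$. Then by definition there exist vertices $u\in V_1$ and $v\in V_2$ such that $u$ has no neighbor in $S\cap V_1$ within $G_1$, and $v$ has no neighbor in $S\cap V_2$ within $G_2$. (If $\gamma_t(G_i)=+\infty$ because $G_i$ has an isolated vertex, the argument still works: that isolated vertex is witness to the failure.) The aim is to show that the edge $\{u,v\}$ of $G$ has no monitor in $S$, contradicting the assumption that $S$ monitors $G$.

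The key step is to describe $\monitors(\{u,v\})$ explicitly using the structure of the join. A vertex $w$ monitors $\{u,v\}$ iff $w$ is adjacent to both $u$ and $v$. If $w\in V_1$, then $w$ is automatically adjacent to $v\in V_2$ by the join, so $w$ monitors $\{u,v\}$ iff $w\in N_{G_1}(u)$. Symmetrically, if $w\in V_2$, then $w$ monitors $\{u,v\}$ iff $w\in N_{G_2}(v)$. Hence
\[
\monitors(\{u,v\})\cap S \;=\; \bigl(N_{G_1}(u)\cap S\cap V_1\bigr)\cup\bigl(N_{G_2}(v)\cap S\cap V_2\bigr).
\]

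By the choice of $u$ and $v$, both sets on the right-hand side are empty, so $\monitors(\{u,v\})\cap S=\emptyset$. Since $\{u,v\}\in E$ and $S$ monitors $G$ (so in particular $1$-monitors $\{u,v\}$ in the $1$-uniform setting used throughout this section), this is a contradiction. Therefore at least one of $S\cap V_1$, $S\cap V_2$ must be a total dominating set of the corresponding side, which is what we wanted.

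The argument is essentially mechanical once the structure of the join is unfolded; the only real content is the identification of $\monitors(\{u,v\})$ as $N_{G_1}(u)\cup N_{G_2}(v)$. I do not expect any genuine obstacle — the main thing to be careful about is the edge-case where $V_1$ or $V_2$ contains isolated vertices of the corresponding $G_i$ (so $\gamma_t$ is infinite), which simply strengthens, rather than weakens, the witness used in the contradiction.
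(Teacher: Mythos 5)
Your proof is correct and follows the same route as the paper: assume neither $S\cap V_1$ nor $S\cap V_2$ is a total dominating set, pick witnesses $u\in V_1$ and $v\in V_2$, and observe that the join edge $\{u,v\}$ is then unmonitored. You merely spell out the identification $\monitors(\{u,v\})=N_{G_1}(u)\cup N_{G_2}(v)$, which the paper leaves implicit.
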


\begin{proof}
Assume for the sake of contradiction that $S_1$ is not a total dominating set of $G_1$ and $S_2$ is not a total dominating set of $G_2$.
Then there exists an edge $\{u, v\} \in E$ such that $u$ has no neighbor in $S_1$ and $v$ has no neighbor in $S_2$.
Thus, $\{u, v\}$ is not monitored by $S$.
\end{proof}

\begin{lemma}\label{cographs-3}
Let $G$ be the join of two graphs $G_1 = (V_1, E_1)$ and $G_2 = (V_2, E_2)$.
Let $S$ be a minimal monitoring set of $G$.
Then $|S \cap V_1| \leq 1$ or $|S \cap V_2| \leq 1$.
\end{lemma}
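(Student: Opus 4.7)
The plan is to argue by contradiction: I suppose both $|S \cap V_1| \ge 2$ and $|S \cap V_2| \ge 2$ and then exhibit a vertex whose deletion still leaves a monitoring set of $G$, violating the minimality of $S$. Write $S_1 = S \cap V_1$ and $S_2 = S \cap V_2$. Applying Lemma \ref{cographs-2} to $S$, at least one of $S_1, S_2$ is a total dominating set of its respective factor; without loss of generality I take $S_1$ to be a total dominating set of $G_1$.

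Next I would pick an arbitrary $v \in S_2$, set $S' = S \setminus \{v\}$, and check that $S'$ still monitors every edge of $G$, splitting the edges into three groups according to the join structure. For an edge $e \in E_1$, both endpoints lie in $V_1$, so every vertex of $V_2$ is adjacent to both of them in the join and hence monitors $e$; since $|S_2| \ge 2$ the set $S_2 \setminus \{v\}$ is non-empty, providing such a monitor inside $S'$. For an edge $e \in E_2$, the symmetric observation gives that every vertex of $V_1$ monitors $e$, and $S_1 \subseteq S'$ has at least two elements. For a cross edge between $V_1$ and $V_2$, Lemma \ref{cographs-1} applied to the total dominating set $S_1$ of $G_1$ shows that $S_1 \subseteq S'$ already monitors it. Consequently $S'$ is a monitoring set strictly contained in $S$, contradicting minimality.

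The proof is essentially a packaging of the two preceding lemmas, so I do not expect a serious obstacle; the one point that needs care is that the vertex removed must come from the side that is \emph{not} providing the total dominating set, which is exactly why the ``without loss of generality'' choice is made before selecting $v \in S_2$. The hypotheses $|S_1|, |S_2| \ge 2$ are used in two distinct places: $|S_2| \ge 2$ guarantees a surviving monitor in $V_2$ for the edges of $E_1$, and $|S_1| \ge 2 \ge 1$ guarantees that $S_1$ is non-empty so that Lemma \ref{cographs-1} can be invoked and the edges of $E_2$ remain covered.
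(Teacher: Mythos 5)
Your proof is correct and takes essentially the same route as the paper's: both invoke Lemma \ref{cographs-2} to obtain a total dominating set on one side, then combine Lemma \ref{cographs-1} with the observation that any single vertex on one side of the join monitors all edges internal to the other side, producing a proper subset of $S$ that still monitors $G$. The only cosmetic difference is that the paper exhibits $S_1 \cup \{u\}$ for some $u \in S_2$ while you delete a single vertex of $S_2$; both contradict minimality in the same way.
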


\begin{proof}
Let $S$ be a minimal monitoring set of $G$, $S_1 = S \cap V_1$ and $S_2 = S \cap V_2$.
Assume, for the sake of contradiction, that $|S_1| \geq 2$ and $|S_2| \geq 2$.
By Lemma \ref{cographs-2}, $S_1$ is a total dominating set of $G_1$ or $S_2$ is a total dominating set of $G_2$.
By symmetry, suppose that $S_1$ is a total dominating set of $G_1$. Then $S_1$ monitors all edges between $V_1$ and $V_2$ by Lemma \ref{cographs-1} and all edges in $V_2$.
Consequently, for every vertex $u \in V_2$, $S_1 \cup \{ u \}$ is a monitoring set of $G$ since $u$ monitors all edges in $V_1$.
Thus, $S$ is not minimal.
\end{proof}

\begin{lemma}\label{cographs-4}
Let $G = (V, E)$ be a graph with no isolated vertices and $S$ a monitoring set of $G$. Then, $S$ is a total dominating set of $G$.
\end{lemma}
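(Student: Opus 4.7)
The plan is to unfold the definitions directly. Since the statement refers to a \emph{monitoring set of $G$} (no cost function written), I will interpret this with the convention from the preliminaries that $c$ is $1$-uniform, so $S$ must $1$-monitor every edge of $G$. The goal is then to show that every vertex $v \in V$ has at least one neighbor in $S$, which is exactly the defining property of a total dominating set.

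I would argue pointwise: fix an arbitrary $v \in V$. The hypothesis that $G$ has no isolated vertex provides a vertex $u \in V$ adjacent to $v$, giving an edge $e = \{u, v\} \in E$. Since $S$ monitors $(G, c)$ with $c$ $1$-uniform, we have $|\monitors(e) \cap S| \geq 1$, so we can pick $w \in \monitors(\{u,v\}) \cap S$. By the definition of $\monitors$, the set $\{u, v, w\}$ forms a triangle in $G$, and in particular $w$ is adjacent to $v$. Thus $v$ has a neighbor in $S$, and since $v$ was arbitrary we conclude $N(S) = V$, i.e., $S$ is a total dominating set of $G$.

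There is no real obstacle here; the argument is essentially a one-liner that uses only the definition of monitoring and the absence of isolated vertices to guarantee that every vertex participates in some edge, whose monitor then serves as the required neighbor in $S$. If one prefers not to rely on the $1$-uniform convention, the same proof works verbatim as long as $c(e) \geq 1$ on the edge $\{u, v\}$ chosen; since every vertex is incident to some edge and $S$ must supply a monitor there, the conclusion is unchanged.
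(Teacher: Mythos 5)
Your proof is correct and is essentially identical to the paper's: both pick, for an arbitrary vertex $v$, an incident edge guaranteed by the absence of isolated vertices, and use the monitor of that edge as the required neighbor of $v$ in $S$. The remark about the $1$-uniform convention is a reasonable clarification but does not change the argument.
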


\begin{proof}
Let $v$ be a vertex in $V$. Since $G$ has no isolated vertices, there is a vertex $e = (v, v_1)$ incident to $v$. Since $S$ is a monitoring set of $G$,
there is a vertex $v_2 \in S$ such that $\{v, v_1, v_2\}$ is a triangle in $G$. Thus, $v$ is adjacent to a vertex in $S$.
\end{proof}

Combining Lemmas \ref{cographs-1}, \ref{cographs-2} and \ref{cographs-4}, we obtain the following lemma.

\begin{lemma}\label{cographs-5}
Let $G = (V, E)$ be the join of two graphs $G_1 = (V_1, E_1)$ and $G_2 = (V_2, E_2)$.
Let $S_1 = S \cap V_1$ and $S_2 = S \cap V_2$.
The two statements hold.
\begin{itemize}
\item If $S_1 \neq \emptyset$ and $S_2 \neq \emptyset$, then $S$ is a monitoring set of $G$ if and only if $S_1$ is a total dominating set of $G_1$ or $S_2$ is a total dominating set of $G_2$.
\item If $S_2 = \emptyset$ (resp. $S_1 = \emptyset$),
then $S$ is a monitoring set of $G$ if and only if $G_1$ (resp. $G_2$)
has no isolated vertices and $S_1$ (resp. $S_2$) is a monitoring set
of $G_1$ (resp. $G_2$).
\end{itemize}
\end{lemma}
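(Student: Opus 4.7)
The plan is to prove each direction of each bullet by reducing to the earlier lemmas, taking care about when the ``cross'' edges are automatically monitored because $G$ is a join.

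First bullet, case $S_1 \neq \emptyset$ and $S_2 \neq \emptyset$. For the forward direction, by Lemma~\ref{cographs-2} applied to the monitoring set $S$, either $S_1$ totally dominates $G_1$ or $S_2$ totally dominates $G_2$. For the converse, suppose without loss of generality that $S_1$ is a total dominating set of $G_1$. By Lemma~\ref{cographs-1}, $S_1$ already monitors every edge between $V_1$ and $V_2$. For an edge $\{u,v\} \in E_1$, any vertex $w \in S_2$ (which exists by hypothesis) is adjacent to both $u$ and $v$ because $G$ is the join, so $\{u,v,w\}$ is a triangle and $w$ monitors $\{u,v\}$; symmetrically, any $w \in S_1$ monitors every edge of $E_2$. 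Hence $S$ monitors $G$.

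Second bullet, case $S_2 = \emptyset$. For the converse direction, assume $G_1$ has no isolated vertices and $S_1$ is a monitoring set of $G_1$. By Lemma~\ref{cographs-4}, $S_1$ is a total dominating set of $G_1$, hence non-empty, and then Lemma~\ref{cographs-1} shows that $S_1$ monitors all cross-edges between $V_1$ and $V_2$. Edges of $E_1$ are monitored by $S_1$ inside $G_1$ (and a triangle in $G_1$ remains a triangle in $G$), and any edge of $E_2$ is monitored by picking any $w \in S_1$, which is adjacent to both endpoints by the join. For the forward direction, suppose $S = S_1$ monitors $G$. Any edge $\{u,v\} \in E_1$ must be monitored by some $w \in S$, and since $S \subseteq V_1$ the triangle $\{u,v,w\}$ lies entirely in $G_1$, so $S_1$ monitors $G_1$. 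Moreover, because $G$ is the join, every $u \in V_1$ participates in some cross-edge $\{u,v\}$ with $v \in V_2$ (assuming $V_2 \neq \emptyset$, which we include in the hypothesis that $G$ is the join of two non-trivial graphs), and the vertex $w \in S_1$ monitoring $\{u,v\}$ must be adjacent to $u$ in $G_1$; hence every vertex of $V_1$ has a neighbor in $S_1$, so $S_1$ totally dominates $G_1$ and in particular $G_1$ has no isolated vertices.

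The argument for $S_1 = \emptyset$ is symmetric. The main subtlety, and the only place where one has to do more than quote the earlier lemmas, is the ``no isolated vertex'' conclusion in the forward direction of the second bullet: one must extract it by observing that the cross-edges force every vertex of $V_1$ to have a neighbor inside $S_1 \subseteq V_1$.
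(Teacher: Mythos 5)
Your proof is correct and follows the same route as the paper, which gives no explicit argument for this lemma but simply states that it is obtained by combining Lemmas~\ref{cographs-1}, \ref{cographs-2} and \ref{cographs-4}; your writeup supplies exactly the details that combination requires, including the one genuinely non-automatic point (deriving the ``no isolated vertex'' condition in the forward direction of the second bullet from the cross-edges of the join).
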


The following lemma is a direct consequence of Lemma \ref{cographs-3} and Lemma \ref{cographs-5}.

\begin{lemma}\label{cographs-6}
Let $G = (V, E)$ be a graph.
If $G$ is the (disjoint) union of two graphs $G_1$ and $G_2$.
Then, $$\wms(G, w) = \wms(G_1, w) + \wms(G_2, w)$$

If $G$ is the join of two graphs $G_1$ and $G_2$.

$$\wms(G,w) = \min \left\{
    \begin{array}{ll}
        \gamma_t(G_1, w) +  \min\{ w(v) : v \in V_2\}\\
        \min\{ w(v) : v \in V_1\} + \gamma_t(G_2, w)\\
        \wms(G_1, w) \mbox { if } G_1 \mbox { has no isolated vertices}\\
        \wms(G_2, w) \mbox { if } G_2 \mbox { has no isolated vertices}\\
    \end{array}
\right.$$
\end{lemma}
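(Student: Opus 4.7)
The plan is to derive both parts directly by combining Lemmas \ref{cographs-3} and \ref{cographs-5}.

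For the disjoint union case, I would first observe that since there are no edges between $V_1$ and $V_2$, every triangle of $G$ is contained entirely in $G_1$ or entirely in $G_2$. Hence for any edge $e$ of $G_i$ we have $\monitors(e) \subseteq V_i$, so a set $S \subseteq V$ monitors $(G,c)$ if and only if $S \cap V_1$ monitors $G_1$ and $S \cap V_2$ monitors $G_2$. Since $w(S) = w(S \cap V_1) + w(S \cap V_2)$, the two minimizations decouple and yield $\wms(G,w) = \wms(G_1,w) + \wms(G_2,w)$.

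For the join case I would prove both inequalities. For the $\leq$ direction, each of the four right-hand quantities, when defined, yields a feasible monitoring set: taking the union of a minimum-weight vertex $v_0 \in V_2$ with a minimum-weight TDS of $G_1$ produces a monitoring set of $G$ by Lemma \ref{cographs-5} (first bullet), giving $\gamma_t(G_1,w) + \min\{w(v) : v \in V_2\}$; the symmetric construction handles the second line; and Lemma \ref{cographs-5} (second bullet) makes any monitoring set of $G_i$ a monitoring set of $G$ whenever $G_i$ has no isolated vertex, covering the last two lines. For the $\geq$ direction, I would take an optimal monitoring set $S$, which I may assume to be minimal since weights are non-negative, and apply Lemma \ref{cographs-3} to assume without loss of generality that $|S_1| \leq 1$. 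If $S_1 = \emptyset$, Lemma \ref{cographs-5} (second bullet) forces $G_2$ to have no isolated vertex and $S = S_2$ to monitor $G_2$, giving $w(S) \geq \wms(G_2,w)$. If $|S_1|=1$ and $S_2 = \emptyset$, the symmetric clause gives $w(S) \geq \wms(G_1,w)$. If $|S_1|=1$ and $S_2 \neq \emptyset$, Lemma \ref{cographs-5} (first bullet) says $S_1$ is a TDS of $G_1$ or $S_2$ is a TDS of $G_2$; but a singleton cannot be a TDS in a loopless graph, so $S_2$ must be a TDS of $G_2$, yielding $w(S) \geq \min\{w(v) : v \in V_1\} + \gamma_t(G_2,w)$. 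Swapping the roles of $G_1$ and $G_2$ covers the analogous case with $|S_2| \leq 1$.

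The main subtlety I expect is the observation in the last subcase that a singleton cannot totally dominate any graph (its lone vertex has no neighbor inside the set, since there are no loops). This is precisely what forces the "pay one vertex on one side and a TDS on the other" structure, and what prevents the formula from having extra lines of the form "singleton in $V_i$ that happens to monitor $G_i$ by itself." Aside from this, checking that the three sub-cases together with the $G_1 \leftrightarrow G_2$ symmetry exhaust every minimal monitoring set is routine.
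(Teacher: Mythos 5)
Your proposal is correct and follows exactly the route the paper intends: the paper gives no detailed argument, merely stating that the lemma is a direct consequence of Lemmas \ref{cographs-3} and \ref{cographs-5}, and your case analysis (decoupling for the disjoint union, then the four feasibility constructions plus the minimality/symmetry reduction via Lemma \ref{cographs-3}) is precisely that derivation spelled out. The observation that a singleton cannot be a total dominating set, which rules out the spurious subcase, is indeed the one small point worth making explicit.
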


Lemma \ref{cographs-6} combined with the fact that a cotree is computable in linear time \cite{cograph-habib} give us a linear time algorithm to compute 1-uniform $\WMS$
on cographs.

\begin{theorem}
1-uniform $\WMS$ can be solved in linear time on cographs.
\end{theorem}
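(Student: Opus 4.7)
The plan is to perform a bottom-up dynamic programming traversal of the cotree of $G$. Since a cotree can be computed in linear time \cite{cograph-habib} and has $O(|V|)$ internal nodes, it will suffice to update a constant amount of information per cotree node. At each node, representing an induced subgraph $H$ of $G$, I will store four quantities: $\wms(H,w)$, $\gamma_t(H,w)$ (set to $+\infty$ when no total dominating set of $H$ exists), $\mu(H) = \min\{ w(v) : v \in V(H) \}$, and a boolean flag indicating whether $H$ has an isolated vertex.

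The base case is a leaf, corresponding to a single vertex $v$: the four values are $\wms = 0$, $\gamma_t = +\infty$, $\mu = w(v)$, and the isolated-vertex flag is true. For a union node $G = G_1 \cup G_2$, Lemma \ref{cographs-6} gives $\wms(G,w) = \wms(G_1,w) + \wms(G_2,w)$; since no edges cross between $G_1$ and $G_2$, we also have $\gamma_t(G,w) = \gamma_t(G_1,w) + \gamma_t(G_2,w)$, $\mu(G) = \min(\mu(G_1), \mu(G_2))$, and $G$ has an isolated vertex iff one of $G_1, G_2$ does. For a join node $G = G_1 + G_2$, I invoke the formula of Lemma \ref{cographs-6} for $\wms(G,w)$ using the stored values of $G_1$ and $G_2$; clearly $\mu(G) = \min(\mu(G_1), \mu(G_2))$, and $G$ has no isolated vertex at any internal cotree node; finally $\gamma_t(G,w) = \min\bigl\{ \gamma_t(G_1,w),\, \gamma_t(G_2,w),\, \mu(G_1) + \mu(G_2) \bigr\}$, where the third option uses the fact that in a join any nonempty subset of $V_1$ dominates all of $V_2$ and vice versa, so a single minimum-weight vertex from each side already forms a total dominating set of $G$.

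Each cotree node is processed in $O(1)$ time, so the whole dynamic programming runs in $O(|V|)$ time, and combined with the linear-time cotree construction this yields the claimed linear-time algorithm. The main point to verify is the $\gamma_t$ recurrence at a join node, which is not a direct consequence of Lemma \ref{cographs-6}: one argues that a total dominating set of $G = G_1 + G_2$ either lies entirely in $V_1$ (and is then a total dominating set of $G_1$, because the vertices of $V_1$ must be dominated by vertices of $V_1$), entirely in $V_2$ (symmetrically), or intersects both sides; in this last case, replacing it by the two minimum-weight vertices on each side yields another total dominating set of no greater weight, so the optimum is captured by one of the three terms in the minimum above.
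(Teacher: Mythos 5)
Your proof is correct and takes essentially the same route as the paper: a bottom-up dynamic program over the linear-time-computable cotree, driven by the union and join formulas of Lemma~\ref{cographs-6}. You additionally make explicit the auxiliary quantities ($\gamma_t$, the minimum vertex weight, and the isolated-vertex flag) and their recurrences — in particular the join recurrence $\gamma_t(G,w)=\min\{\gamma_t(G_1,w),\gamma_t(G_2,w),\mu(G_1)+\mu(G_2)\}$, which the paper leaves implicit — and these are all correct.
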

\section{Split graphs and comparability graphs}

A  graph $G = (V, E)$ is a split graph is $V$ can be partionned into $C$ and $I$ where $C$ is a clique of $G$ and $I$ is an independant set of $G$.

\begin{lemma}\label{splitgraphs-lemma1}
Let $G = (V = C \cup I, E)$ be a split graph with minimum degree $\delta(G) \geq 2$ and such that $|C| \geq 3$ and 
Then, there exists a minimum 2-tuple dominating set (resp. monitoring set) $S \subseteq C$.
\end{lemma}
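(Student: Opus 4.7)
\textbf{Plan for Lemma \ref{splitgraphs-lemma1}.} The idea is to take an arbitrary minimum 2-tuple dominating set (resp. monitoring set) $S^{*}$ and either argue that $S^{*}\subseteq C$ already, or transform $S^{*}$ into a set of the same size that lies in $C$. If $I=\emptyset$ there is nothing to prove, so I will assume $I\neq \emptyset$.

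For the double-dominating case, among all minimum 2-tuple dominating sets I would choose one that also minimises $|S^{*}\cap I|$ and suppose for contradiction that some $v\in S^{*}\cap I$ exists. I would first rule out $C\subseteq S^{*}$: if this held, $S^{*}\setminus\{v\}$ would still be 2-tuple dominating, because for every $w\in C$ one has $C\subseteq N[w]\cap S^{*}$ and $|C|\geq 3$, so removing $v$ leaves coverage at least $2$; the vertex $v$ itself has $N(v)\subseteq C\subseteq S^{*}$ with $|N(v)|\geq 2$ by $\delta(G)\geq 2$; and the remaining vertices of $I$ are unaffected because $I$ is independent. That contradiction yields $C\not\subseteq S^{*}$. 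I would then pick any $u\in C\setminus S^{*}$ and consider the swap $S':=(S^{*}\setminus\{v\})\cup\{u\}$, which preserves $|S^{*}|$ and drops one element from $I$. If $u\in N(v)$, the swap preserves $v$'s coverage (lose $1$ for $v$, gain $1$ for $u$); if $u\notin N(v)$, then necessarily $N(v)\subseteq S^{*}$, so $|N[v]\cap S^{*}|\geq 1+|N(v)|\geq 3$ and $v$ still has coverage $\geq 2$ after the swap. The coverage of $u$, of the other vertices of $C$ (which actually gain $1$ because of the clique structure), and of the other vertices of $I$ (which cannot be harmed by removing $v$ since $I$ is independent) is then checked routinely. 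This contradicts the minimality of $|S^{*}\cap I|$.

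For the monitoring case, set $S_{C}=S^{*}\cap C$ and $S_{I}=S^{*}\cap I$. I would first observe that $|S_{C}|\geq 2$: for any $y\in I$ and $x\in N(y)\subseteq C$, the edge $\{x,y\}$ needs a monitor that lies in $N(y)\setminus\{x\}\subseteq C$, forcing $|S_{C}\cap N(y)|\geq 2$. Then split on $|S_{C}|$. If $|S_{C}|\geq 3$, the set $S_{C}$ is itself a monitoring set: every edge inside $C$ is monitored by $S_{C}\setminus\{x,y\}$, and every edge $\{x,y\}$ between $C$ and $I$ was already monitored in $S^{*}$ by a vertex of $S_{C}\cap N(y)\setminus\{x\}$; minimality of $S^{*}$ then gives $S^{*}=S_{C}\subseteq C$. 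If $|S_{C}|=2$, write $S_{C}=\{a,b\}$; the only edge of $C$ not automatically monitored by $\{a,b\}$ is $\{a,b\}$ itself, whose monitor must belong to $S_{I}\cap N(a)\cap N(b)$, producing some $v_{0}\in S_{I}$. Every other element of $S_{I}$ is now superfluous, so minimality forces $S_{I}=\{v_{0}\}$ and $|S^{*}|=3$. Choosing any $c\in C\setminus\{a,b\}$ (possible because $|C|\geq 3$), the set $\{a,b,c\}\subseteq C$ is a monitoring set of size $3$: edges inside $C$ are monitored by one of $a,b,c$ not in the edge, and each cross-edge $\{x,y\}$ is monitored by the very same vertex of $\{a,b\}\cap N(y)\setminus\{x\}$ that monitored it in $S^{*}$. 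The main obstacle is precisely this $|S_{C}|=2$ case, where one has to pin $|S^{*}|=3$ exactly via a minimality argument, and then verify that the replacement $\{a,b,c\}$ inherits the monitoring of all cross-edges by containment.
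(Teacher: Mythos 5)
Your overall strategy --- take a minimum set minimising $|S^{*}\cap I|$ and exchange vertices of $I$ for vertices of $C$ --- is the same as the paper's for the 2-tuple dominating case, and your monitoring argument is a correct, slightly more structural variant of the paper's (the paper again swaps one $I$-vertex at a time from a set minimising $|S^{*}\cap I|$, whereas you classify by $|S^{*}\cap C|$ and exhibit the replacement set explicitly; both work, and your key observation that every monitor of a cross-edge $\{x,y\}$ with $y\in I$ lies in $N(y)\setminus\{x\}\subseteq C$ is exactly the fact the paper relies on).

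There is, however, one step in the double-domination half that fails as written. After ruling out $C\subseteq S^{*}$ you pick an \emph{arbitrary} $u\in C\setminus S^{*}$ and, in the case $u\notin N(v)$, assert that ``necessarily $N(v)\subseteq S^{*}$''. That implication is false: nothing prevents $N(v)\setminus S^{*}$ from being nonempty while your arbitrarily chosen $u$ happens to lie outside $N(v)$; in that situation the swap removes the dominator $v$ of $v$ without adding one, and $|N[v]\cap S'|$ can drop to $1$ (e.g.\ when $N[v]\cap S^{*}=\{v,a\}$ and $u$ is a clique vertex not adjacent to $v$). The case split must be made on $N(v)$ \emph{before} choosing $u$, which is how the paper organises it: if $N(v)\not\subseteq S^{*}$, choose $u\in N(v)\setminus S^{*}$ (this set is contained in $C$ because $I$ is independent), so the swap trades one dominator of $v$ for another; if $N(v)\subseteq S^{*}$, then $|N[v]\cap(S^{*}\setminus\{v\})|\geq |N(v)|\geq 2$ by $\delta(G)\geq 2$, and any $u\in C\setminus S^{*}$ may be used. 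With that reordering, every coverage check you list goes through and the contradiction with the minimality of $|S^{*}\cap I|$ is obtained; the rest of your proposal is sound.
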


\begin{proof}
Let $S$ be a set that minimizes $|S \cap I|$ among all minimum 2-tuple dominating sets of $G$.
For the sake of contradiction, suppose $S \cap I$ non empty and let $v$ be a vertex in $S \cap I$.
If $N(v) \subseteq S$, then $S - v$ is also a 2-tuple dominating set of $G$. Thus, $G$ is not minimum.
Otherwise, let $u \in S \setminus N(v)$. Then $S' = S \cup \{u\} - v$ is a minimum 2-tuple dominating set of $G$ with
$|S' \cap I| < |S \cap I|$. Thus $S$ does not minimize $|S \cap I|$.

The proof for monitoring sets is quite similar to the proof for 2-tuple dominating sets.
Let $S$ be a set that minimizes $|S \cap I|$ among all minimum monitoring sets of $G$.
For the sake of contradiction, suppose $S \cap I$ non empty and let $v$ be a vertex in $S \cap I$.
$S$ contains at least 3 vertices and $|S \cap C| \geq 2$. Otherwise, $S$ does not monitor all vertices between $C$ and $I$.
If $N(v) \subseteq S$ and $|S \cap C| \geq 3$, then $S - v$ is also a monitoring set of $G$. Thus $S$ is not minimum.
If $N(v) \subseteq S$ and $|S \cap C| = 2$ then
choose a vertex $u \in C \setminus S$. Thus, $S' = S \cup \{u\} - v$ is a minimum monitoring set with $|S' \cap I| < |S \cap I|$.
Now, suppose that $N(v) \nsubseteq S$ and let $u \in N(V) \setminus S$. Then, $S' = S \cup \{u\} - v$ is a minimum monitoring set with $|S' \cap I| < |S \cap I|$.
That contradicts our assumption.
\end{proof}

\begin{lemma}
Let $G = (V = C \cup I, E)$ be a split graph with minimum degree $\delta(G) \geq 2$ and such that $|C| \geq 3$ and $\gamma_{\times 2}(G) \geq 3$.
Then, $\ms(G) = \gamma_{\times 2}(G)$.
\end{lemma}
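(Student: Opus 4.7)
The plan is to prove both inequalities separately, using Lemma~\ref{splitgraphs-lemma1} throughout to restrict attention to sets contained in $C$.

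For the inequality $\ms(G) \leq \gamma_{\times 2}(G)$, I would let $S \subseteq C$ be a minimum double dominating set, so $|S| = \gamma_{\times 2}(G) \geq 3$, and show that $S$ monitors every edge of $G$. Since $I$ is independent, every edge has at least one endpoint in $C$. If an edge $\{u,v\}$ lies entirely inside $C$, then $|S \setminus \{u,v\}| \geq |S| - 2 \geq 1$, and since $C$ is a clique any such vertex $w$ forms a triangle with $u,v$. If instead $u \in C$ and $v \in I$, then $v \notin S$, so the double-domination condition yields $|N(v) \cap S| \geq 2$; hence there exists $w \in (N(v) \cap S) \setminus \{u\}$, and because $w \in C$ it is adjacent to $u$, so $\{u,v,w\}$ is a triangle.

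For the reverse inequality $\ms(G) \geq \gamma_{\times 2}(G)$, I would let $S \subseteq C$ be a minimum monitoring set and verify the double-domination condition $|N[x] \cap S| \geq 2$ for every $x \in V$. For $x \in C$ the clique structure gives $N[x] \cap S = S$, so it suffices to show $|S| \geq 2$: pick any edge $\{u,v\} \subseteq C$ (which exists since $|C| \geq 3$); its monitor $w_1 \in S$ must differ from $u,v$, and then the monitor $w_2 \in S$ of the edge $\{u,w_1\} \subseteq C$ satisfies $w_2 \neq u, w_1$, yielding two distinct elements of $S$. For $x \in I$ we have $x \notin S$, so the condition reduces to $|N(x) \cap S| \geq 2$; since $\delta(G) \geq 2$, I can pick $y \in N(x)$, then monitoring the edge $\{x,y\}$ produces some $w \in N(x) \cap S$, and in turn monitoring the edge $\{x,w\}$ produces some $w' \in N(x) \cap S$ with $w' \neq w$.

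The only delicate point is the case of an edge lying entirely inside $C$ in the first direction: this is precisely where the hypothesis $\gamma_{\times 2}(G) \geq 3$ is needed, because if $\gamma_{\times 2}(G) = 2$ then a double dominating pair $\{u,v\} \subseteq C$ would itself induce an edge with no third vertex in $S$ to monitor it. All the other steps rely only on the clique structure of $C$ and on chasing monitors of two edges sharing a vertex, so I do not expect them to be an obstacle.
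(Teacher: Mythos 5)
Your proposal is correct and follows essentially the same two-inequality argument as the paper: a double dominating set contained in $C$ (via Lemma~\ref{splitgraphs-lemma1}) monitors everything because $|S|\geq 3$ handles clique edges and double domination handles $C$--$I$ edges, while monitor-chasing on two edges sharing a vertex shows monitoring sets are double dominating. The only cosmetic difference is that for the direction $\ms(G)\geq\gamma_{\times 2}(G)$ the paper argues directly for an arbitrary monitoring set without restricting it to $C$, which slightly shortens that half but changes nothing substantive.
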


\begin{proof}
To see that $\gamma_{\times 2}(G) \leq \ms(G)$, consider 
a monitoring set $S$ and a vertex $v$. Since $\delta(G) \geq 2$, $v$ admits a neighbor $v_1$.
Since $\{v, v_1\}$ is monitored by $S$, $v$ admits a neighbor $v_2 \in S$
and, since $\{v, v_2\}$ is monitored by $S$, $v$ admits another neighbor $v_3 \in S$. Thus $S$ is a double dominating set of $G$.

We will prove that $\gamma_{\times 2}(G) \geq \ms(G)$.
Let $S$ be a minimum 2-tuple dominating set of $G$.
Thanks to Lemma \ref{splitgraphs-lemma1}, we can assume without loss of generality that $S \subseteq C$.
Since $|S| \geq 3$, $S$ monitors all edges in $G[C]$. Let $\{u, v\}$ be an edge in $G$ such that $u \in C$ and $v \in I$.
Since $S$ dominates twice the vertex $v$, there is a node $u' \in S \cap N(v)$ distinct to $u$. Thus $\{u, v\}$ is monitored by $u'$.
Consequently, $S$ is a monitoring set of $G$.
\end{proof}

Since 2-tuple domination is $\NP$-complete on split graphs even with these restrictions \cite{double-domination-in-split-graphs}, we obtain the following result.

\begin{theorem}
1-uniform $\MS$ is $\NP$-complete on split graphs.
\end{theorem}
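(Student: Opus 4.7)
The plan is to use the preceding lemma directly: it establishes $\ms(G)=\gamma_{\times 2}(G)$ on split graphs $G=(V=C\cup I,E)$ satisfying $\delta(G)\ge 2$, $|C|\ge 3$, and $\gamma_{\times 2}(G)\ge 3$. Since the cited reference \cite{double-domination-in-split-graphs} states that 2-tuple domination remains $\NP$-complete on split graphs even when all three of these restrictions are enforced, this equality gives a trivial polynomial-time reduction from 2-tuple domination to 1-uniform $\MS$.

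Concretely, I would proceed as follows. First, for $\NP$ membership, observe that given a candidate set $S\subseteq V$ of size at most $k$, one can verify in polynomial time whether $S$ is a monitoring set of $G$ by iterating over every edge $e=\{u,v\}\in E$ and checking that $|N(u)\cap N(v)\cap S|\ge 1$. Second, for $\NP$-hardness, take an arbitrary instance $(G,k)$ of 2-tuple domination where $G$ is a split graph satisfying the three restrictions above. Output exactly the same instance $(G,k)$ for 1-uniform $\MS$. By the previous lemma, $G$ admits a 2-tuple dominating set of size at most $k$ if and only if $G$ admits a monitoring set of size at most $k$. This is a polynomial-time (in fact identity) reduction, so $\MS$ is $\NP$-hard on split graphs.

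No step should be a real obstacle here, because the conceptual work was done in the preceding lemma; the only thing that must be cited carefully is that the hardness reference actually preserves the three hypotheses $\delta(G)\ge 2$, $|C|\ge 3$, and $\gamma_{\times 2}(G)\ge 3$ needed to invoke the equality $\ms(G)=\gamma_{\times 2}(G)$. If any of these hypotheses were not guaranteed by the known hardness result, a small padding argument would be needed (for example, attaching a fixed clique gadget to boost $|C|$ or the minimum degree while controlling the change in both parameters by an additive constant), but as stated in the paper the cited result already covers this regime, so the reduction is immediate.
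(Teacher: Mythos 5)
Your proposal matches the paper's argument exactly: the paper likewise derives the theorem as an immediate consequence of the preceding lemma ($\ms(G)=\gamma_{\times 2}(G)$ under the stated restrictions) together with the cited $\NP$-completeness of 2-tuple domination on split graphs with those restrictions, yielding the same identity reduction. Your added remarks on $\NP$ membership and on checking that the hardness reference preserves the hypotheses are sound and, if anything, slightly more careful than the paper's one-line justification.
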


A graph $G = (V, E)$ is a comparability graph if there exists a poset $\leq$ over $V$ such that $\{x, y\} \in E$ if and only if $x \leq y$ or $y \leq x$ for every $x, y \in E$.

\begin{theorem}
1-uniform $\MS$ is $\NP$-complete on comparability graphs.
\end{theorem}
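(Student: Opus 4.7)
The plan is to adapt the gadget used in Theorem~\ref{non-approx} and to verify that it preserves comparability; since $1$-uniform $\MS$ is clearly in $\NP$, only the hardness reduction is at stake.

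First I would establish that \textsc{TotalDominatingSet} is $\NP$-hard on comparability graphs. This comes essentially for free: every bipartite graph $(A \cup B, F)$ is a comparability graph via the transitive orientation that sends every edge from $A$ to $B$ (transitivity is vacuous because no vertex has both an incoming and an outgoing arc), and \textsc{TotalDominatingSet} is known to be $\NP$-hard on bipartite graphs without isolated vertex. Given such a comparability graph $G=(V,E)$ without isolated vertex, I would build $G'$ exactly as in the proof of Theorem~\ref{non-approx}: add three fresh vertices $u,v,w$ forming a triangle, and make $u$ adjacent to every vertex of $V$. Theorem~\ref{non-approx} already proves $\ms(G') = \gamma_t(G) + 3$, so this construction is automatically a polynomial-time many-one reduction of \textsc{TotalDominatingSet} to $1$-uniform $\MS$.

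The main step is to show that $G'$ is still a comparability graph. Fixing a transitive orientation $\prec$ of $G$, I would extend it to $\prec'$ on $G'$ by placing $u$ at the bottom (so $u \prec' x$ for every $x \in V$, $u \prec' v$, and $u \prec' w$) and setting $v \prec' w$. The only $2$-chains involving a new vertex are of three shapes: $u \prec' x \prec' y$ with $x,y \in V$, which closes by $u \prec' y$ because $u$ is universal to $V$; $u \prec' v \prec' w$, which closes by $u \prec' w$; and chains that would enter $V$ through $v$ or $w$, which cannot exist since $v$ and $w$ have no neighbors in $V$. Hence $\prec'$ is transitive and $G'$ is a comparability graph.

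I expect the (short) transitivity check to be the only real difficulty; once it is settled, $\NP$-completeness of $1$-uniform $\MS$ on comparability graphs follows from the bipartite $\NP$-hardness of \textsc{TotalDominatingSet} and the equivalence $\ms(G') = \gamma_t(G) + 3$ given by Theorem~\ref{non-approx}.
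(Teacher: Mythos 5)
Your proof is correct and follows essentially the same route as the paper: a polynomial reduction from \textsc{TotalDominatingSet} on bipartite graphs, realized by attaching a universal vertex and checking that comparability is preserved by putting that vertex at the bottom of the order. The only difference is the gadget: the paper adds a \emph{single} universal vertex $u$ and uses the triangle-freeness of the bipartite graph to force $u$ into every monitoring set (giving $\ms(G')=\gamma_t(G)+1$), whereas you reuse the three-vertex clique of Theorem~\ref{non-approx} (giving $\ms(G')=\gamma_t(G)+3$); both forcing arguments are sound and your transitivity check for the extended orientation is complete.
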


\begin{proof}
We do a reduction from \textsc{TotalDominatingSet} on bipartite graphs which has been proved $\NP$-complete \cite{totaldomination-bipartite}.
Let $G = (V, E)$ be a bipartite graph.
Without loss of generalility, assume that $G$ has no isolated vertices. 
Let $G'$ be the graph obtained from $G$ by adding an universal vertex $u$.
It is clear that $G'$ is a comparability graph.
We will prove that $\ms(G') = \gamma_t(G) + 1$.
Let $S$ be a total dominating set of $G$. Then, $S \cup \{u\}$ is a monitoring set of $G$. Indeed, every edge in $E$ is covered by $u$ and
for every edge $\{u, v\}$ with $v \in V$, there is a vertex $v' \in N(v) \cap S$. Thus, $\{u, v\}$ is monitored by $v'$.
Now, let $S$ be a monitoring set of $G'$. Then, $u \in S$ because $u$ is the only vertex that monitors edges in $E$.
$S - u$ is a total dominating set of $G$. Indeed, let $v$ be a vertex in $V$. $\{u, v\}$ is an edge of $G'$ monitored by a vertex $v' \in S - u$ distinct from $v$.
Thus, $v$ is dominated $v'$.
\end{proof}

\section{Planar graphs and unit disk graphs}

\subsection{Negative results}

A graph $G = (V, E)$ is an unit disk graph if it there exists a map $f: V \rightarrow \mathbb{R}^2$ satisfying
$$\{u, v\} \in E \Leftrightarrow \|f(u) - f(v)\| \leq 2$$
$f$ is called a geometric representation of $G$.

Recognizing whether a graph $G$ is an unit disk graph is $\NP$-hard \cite{unitdisk-np-complete}.
Thus, computing a geometric representation of an unit disk graph is also $\NP$-hard.
Consequently, we suppose that an unit disk graph $G$ is given with a geometric representation $f$.

Dong et al \cite{DXYCX} prove that $k$-uniform $\MS$ is $\NP$-complete on unit disk graphs for every $k \geq 2$.
We prove a stronger result for $1$-uniform $\MS$.

\begin{theorem}\label{udg-np-complete}
1-uniform $\MS$ is $\NP$-complete on planar unit disk graphs given with a geometric representation.
\end{theorem}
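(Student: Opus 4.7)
Membership in $\NP$ is routine: given a candidate set $S \subseteq V$, one can verify in polynomial time that for every edge $e = \{u,v\}$ there exists a vertex $w \in S$ with $w \in N(u) \cap N(v)$, i.e.\ that $|\monitors(e) \cap S| \geq 1$.

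For $\NP$-hardness the plan is to reduce from a planar variant of a known $\NP$-complete problem, for instance Planar $3$-SAT (Lichtenstein) or Planar (Total) Dominating Set in graphs of bounded maximum degree. First I would lay out the source instance on an orthogonal grid in polynomial time using standard planar embedding results, scaling the grid so that distinct gadgets are placed far enough apart (strictly more than distance $2$ between non-adjacent gadgets) that no unintended unit-disk adjacencies arise and the global drawing remains planar.

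The construction replaces each variable (or vertex) of the source instance by a small cluster of unit disks forming a planar collection of triangles, and each clause incidence (or edge) by a ``wire'' of triangles running along the grid path assigned to it. The explicit coordinates of each disk certify unit-disk realizability, and the spacing between unrelated gadgets guarantees both planarity and the absence of spurious edges in $G$. Each gadget is engineered so that (i) every edge of $G$ belongs to at least one triangle, so that a $1$-uniform monitoring set exists at all, and (ii) the minimum number of monitors that can be placed inside the gadget takes exactly two values, corresponding to a local selected/unselected (respectively true/false) decision. Wire gadgets enforce consistency between the decisions of adjacent variable gadgets, while clause gadgets admit the minimum monitoring count if and only if at least one of their incident literal wires is in the satisfying state.

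The main obstacle is the gadget design itself: one must exhibit small planar unit-disk configurations whose optimal $1$-uniform monitoring sets faithfully encode Boolean propagation and clause satisfaction, with every edge lying in some triangle and with no ``cheating'' monitoring set achieving the minimum count while violating the intended local semantics. Once the gadgets are exhibited, an elementary but exhaustive case analysis on each gadget, combined with the geometric separation argument, yields a polynomial function $f$ such that the source instance is positive if and only if $G$ admits a $1$-uniform monitoring set of size at most $f(\mathrm{OPT})$. Combined with the explicit geometric realization produced by the construction, this establishes $\NP$-completeness of $1$-uniform $\MS$ on planar unit disk graphs with a given geometric representation.
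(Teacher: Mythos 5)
Your NP-membership argument is fine, and your overall plan (reduce from a planar problem, draw it on an orthogonal grid via a standard embedding theorem, scale so that no spurious unit-disk adjacencies appear, replace combinatorial objects by triangle gadgets) is exactly the right shape and matches the skeleton of the paper's argument. However, the proposal has a genuine gap, and you name it yourself: the gadgets are never exhibited. For this theorem the gadget design \emph{is} the proof. Without concrete configurations, coordinates, and a case analysis showing that (i) certain vertices are forced into every monitoring set, (ii) the remaining freedom encodes exactly the intended Boolean or selection choice, and (iii) no ``cheating'' solution attains the target size, there is no reduction --- only a statement that one should exist. A secondary issue is the final sentence: phrasing the correspondence as ``the source instance is positive iff $G$ admits a monitoring set of size at most $f(\mathrm{OPT})$'' is not a well-formed Karp reduction; the threshold $k'$ must be computable from the instance itself (in the paper it is $k' = k + \sum_i (5n_i+2)$, where the $n_i$ are read off the embedding), not from the unknown optimum.

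The paper also chooses a source problem that makes the gadgetry much lighter than what you propose. It reduces from \textsc{PlanarVertexCover} on graphs of maximum degree $3$, embeds the graph orthogonally via Valiant's theorem, and replaces each edge $\{u,v\}$ by a chain of triangles $a_{i,0}=u, b_{i,0}, b'_{i,0}, a_{i,1}, \ldots, a_{i,2n_i+1}=v$. The vertices $b_{i,j}, b'_{i,j}$ are forced into every monitoring set (each is the unique monitor of an edge incident to its partner), and then a parity/counting argument on the $a_{i,j}$ shows that the chain can be completed with exactly $n_i$ extra vertices iff at least one endpoint of the edge is selected --- which is precisely the vertex-cover condition. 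Starting from Planar $3$-SAT, as you suggest, would force you to design separate variable, wire, and clause gadgets with consistent truth propagation, which is strictly harder to realize as planar unit-disk triangle configurations. If you want to complete your argument, I would recommend switching to vertex cover and working out an explicit edge gadget along these lines.
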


The proof is inspired by Theorem 4.1 in \cite{unit-disk-graphs}.
As in \cite{unit-disk-graphs} we use the following lemma:

\begin{lemma}\label{valiant81}\cite{valiant81}
A planar graph $G$ with maximum degree 4 can be
embedded in the plane using $O(|V|)$ area in such a way that its vertices are at integer coordinates and its edges are drawn so that they are made up of horizontal or vertical segments. 
\end{lemma}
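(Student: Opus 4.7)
The plan is to reduce the statement to the classical orthogonal-drawing pipeline for planar graphs. I would carry out the following steps in order: (i) compute a combinatorial planar embedding of $G$; (ii) convert this embedding into a visibility representation on an integer grid; (iii) collapse that representation into a rectilinear grid drawing with vertices at grid points; and (iv) verify the area bound.

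For step (i), the Hopcroft--Tarjan linear-time planarity algorithm produces a rotation system realizing a crossing-free drawing of $G$; combined with $\Delta(G)\le 4$, Euler's formula gives $|E|\le 2|V|-4$, so the embedding has complexity $O(|V|)$. For step (ii), I would first make $G$ 2-connected by inserting $O(|V|)$ auxiliary edges inside faces and, wherever the degree bound is threatened, subdivision vertices that can be suppressed at the end; then, for a pair $s,t$ lying on a common face, compute an $st$-orientation and apply the Rosenstiehl--Tarjan / Tamassia--Tollis construction. This yields a visibility representation in which each vertex $v$ is a horizontal bar at integer $y$-coordinate $y_v$ and each edge is a vertical segment at integer $x$-coordinate joining the bars of its two endpoints, all fitting on an $O(|V|)\times O(|V|)$ integer grid. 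For step (iii), I would shrink each bar to a single grid point: because $\deg(v)\le 4$, at most two edges leave $v$'s bar upward and at most two leave downward, and after a constant uniform dilation of the grid each can be rerouted as an L-shaped polyline entering $v$ through one of the four cardinal slots N, S, E, W.

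The main obstacle is step (iii): once bars collapse to points, edges that previously shared a bar must be separated onto distinct horizontal and vertical tracks, and no two L-shapes may cross. The degree-4 hypothesis is exactly what makes this combinatorially possible, since the four cardinal slots suffice to accommodate every incident edge at every vertex simultaneously. Verifying that a single uniform constant-factor scaling works at every vertex at once, preserves integer coordinates, and produces a crossing-free rectilinear drawing is the delicate part; once this is established, step (iv) is automatic, because both the vertex count and the scaled grid side remain linear in $|V|$, matching Valiant's original bound as stated.
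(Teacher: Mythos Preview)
The paper does not give its own proof of this lemma: it is stated with the citation \cite{valiant81} and used as a black box in the proof of Theorem~\ref{udg-np-complete}. There is therefore no in-paper argument to compare your proposal against.

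That said, your outline is a reasonable route to the result and is essentially the standard post-Valiant pipeline (visibility representation followed by bar contraction under the degree-$4$ hypothesis). One point worth flagging: your construction yields an $O(|V|)\times O(|V|)$ grid, i.e.\ $O(|V|^2)$ area, which is indeed what Valiant's theorem gives for general planar graphs of maximum degree~$4$; the phrase ``$O(|V|)$ area'' in the lemma statement appears to be an imprecision of the paper (the intended meaning is an $O(|V|)\times O(|V|)$ grid, and this is all that is used downstream). Your step~(iii) is the genuinely nontrivial part, and you correctly identify it as such; the claim that a single uniform constant dilation suffices to route all L-shapes without crossings is where the real work lies, and a full proof would need to spell out the port assignment and disjoint-track argument rather than assert it.
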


\begin{proof}(of Theorem \ref{udg-np-complete})
We show a reduction from \textsc{PlanarVertexCover} with maximum
degree 3 which is $\NP$-complete \cite{vc-planar-graphs}.
Let $G = (V, E)$ be a planar graph with maximum degree 3.
Let $\{e_1, \ldots, e_{|E|}\}$ be the edges in $G$.
Let $N > 0$ be a sufficient large integer.
We draw $G$ in the plane using Lemma \ref{valiant81} (see Figure \ref{fig-udg-grid}) and we multiply each coordonate by $N$ i.e. each vertex is at coordonate $(iN, jN)$ for some integers $i$ and $j$.
We build $G' = (V', E')$ from $G$ by replacing each edge $e_i = \{u, v\}$ with a subgraph $G_{e_i}$ of vertices
$\{a_{i,0} = u, b_{i,0}, b'_{i,0}, a_{i,1}, b_{i,1}, b'_{i,1}, \ldots, a_{i,2n_i}, b_{i,2n_i}, b'_{i,2n_i}, a_{i,2n_i+1} = v\}$ where each $n_i$ is an integer that depends on the length of the embedding of $e_i$.
For each $i \in [0, 2n_i]$, we connect $b_i$ and $b'_i$ to $a_i$
and $a_{i+1}$ and we connect $b_i$ to $b'_i$
(see Figure \ref{UDG_transformation}).

\begin{figure}[htb]
\centering

\tikzstyle{vertex}=[circle, draw, fill=black!50, inner sep=0pt, minimum width=6pt]
\tikzstyle{tedge}=[very thick,blue]
                        
\begin{tikzpicture}[scale=0.6]

\draw[step=1cm,gray,very thin, dashed] (0,0) grid (6,6);

\node[vertex,label={[xshift=0.2cm,yshift=-0.1cm]$v_1$}] (v1) at (1, 2) {};
\node[vertex,label={[xshift=0.2cm,yshift=-0.1cm]$v_2$}] (v2) at (3, 2) {};
\node[vertex,label={[xshift=0.2cm,yshift=-0.1cm]$v_3$}] (v3) at (5, 2) {};
\node[vertex,label={[xshift=0.2cm,yshift=-0.1cm]$v_4$}] (v4) at (3, 5) {};

\draw [tedge] (v1) -- (v2)  -- (v3); 
\draw [tedge] (v2) -- (v4);
\draw [tedge] (v1) -- (1, 1) -- (1, 1) -- (5, 1) -- (v3);
\draw [tedge] (v1) -- (1, 5) -- (v4);
\draw [tedge] (v3) -- (5, 5) -- (v4);

\draw [->, very thick] (-2, 2.5) -- (-1, 2.5);

\node[vertex,label={[xshift=-0.2cm,yshift=-0.1cm]$v_1$}] (n1) at (-6, 1) {};
\node[vertex,label={[xshift=0.2cm,yshift=-0.1cm]$v_2$}] (n2) at (-3, 1) {};
\node[vertex,label={[xshift=0.2cm,yshift=-0.1cm]$v_3$}] (n3) at (-4.5, 2) {};
\node[vertex,label={[xshift=0.2cm,yshift=-0.1cm]$v_4$}] (n4) at (-4.5, 4) {};
\draw (n1) -- (n2) -- (n3) -- (n1) -- (n4) -- (n2);
\draw (n3) -- (n4);

\end{tikzpicture}
 \caption{A representation of $K_4$ in the grid}
 \label{fig-udg-grid}
\end{figure}
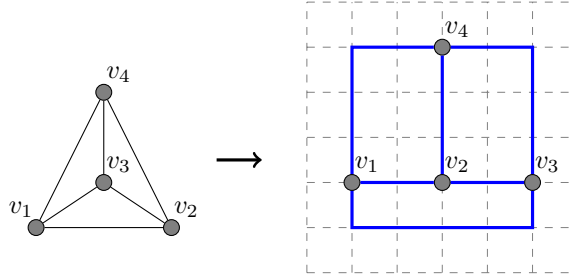

\begin{figure}[htb]
\centering
\begin{tikzpicture}[scale=0.6]
 \node (n1) at (-6,0) {$u$};
 \node (n2) at (-4,0) {$v$};
 \path  (n1) edge (n2);

 \node (u) at (0, 0) {$u$};
 \node (b0) at (2,1) {$b_{i,0}$};
 \node (c0) at (2, -1) {$b'_{i,0}$};
 \node (a1) at (4, 0) {$a_{i,1}$};
 \node (b1) at (6, 1) {$b_{i,1}$};
 \node (c1) at (6, -1) {$b'_{i,1}$};
 \node (a2) at (8, 0) {$a_{i,2}$};
 \node (b2) at (10, 1) {$b_{i,2}$};
 \node (c2) at (10, -1) {$b'_{i,2}$};
 \node (v) at (12, 0) {$v$};
 
 \draw[->,very thick] (-3,0) -- (-2, 0);
 
 \path
 (u) edge (b0) 
 (u) edge (c0)
 (b0) edge (c0)
 (b0) edge (a1)
 (c0) edge (a1)
 (a1) edge (b1)
 (a1) edge (c1)
 (b1) edge (c1)
 (b1) edge (a2)
 (c1) edge (a2)
  (a2) edge (b2) 
 (a2) edge (c2)
 (b2) edge (c2)
 (b2) edge (v)
 (c2) edge (v);
 
 \foreach \x in {u,b0,c0,a1,b1,c1,a2,b2,c2,v} {
   \draw[blue] (\x) circle (1.3);
 }

 \end{tikzpicture}
 \caption{An edge $e_i = \{u, v\}$ and its associate graph $G_{e_i}$ for $n_i=1$}
 \label{UDG_transformation}
\end{figure}
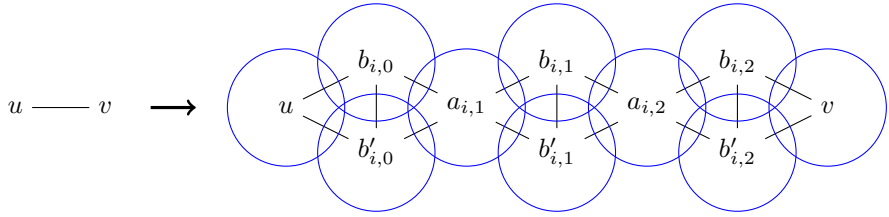


It is easily seen that the obtained graph $G'$ is planar and that there exists an unit disk representation of $G'$ for $N$ sufficient large.
Now, we prove that $G$ admits a vertex-cover $S$ such that $|S| \leq k$ if and only if $G'$ has a monitoring set $S'$
such that $|S'| \leq k' = k + \sum_{i \in [1, |E|} (5n_i+2)$.
Let $A$ be the set of vertices $a_{i,j}$ for $i \in [1, |E|]$ and $j \in [1, 2n_i]$. 
Let $B$ be the set of vertices $b_{i,j}$ and $b'_{i,j}$ for $i \in [1, |E|]$ and $j \in [0, 2n_i]$.
Clearly, $V'$ is the disjoint union of $V$, $A$ and $B$.
Moreover, $|B| = \sum_{e_i \in E} (4n_i+2)$. The proof is an immediate consequence of these three facts.

(1) If a set $S \subseteq V'$ monitors $G'$ then $B \subseteq S$: otherwise, there exists a vertex $b_{i,j}$ or $b'_{i,j}$ that is not in $S$.
Then $\{ a_{i,j}, b_{i, j}\}$ or $\{ a_{i,j}, b'_{i, j}\}$ is not monitored by $S$.

(2) Let $S$ be a vertex-cover of $G$. Then there is a set $A' \subseteq A$ such that $|V(G_{e_i}) \cap A'| = n_i$ for every $i \in [1, |E|]$
and such that $S \cup A' \cup B$ is a monitoring set of $G'$: let $e_i = \{u, v\}$ be an edge in $G$. If $u \in S$, then we choose $a_{2i}$ for $i \in [1, n_i]$
as elements of $A'$. Otherwise ($v \in S$), we choose $a_{2i+1}$ for $i \in [0, n_i-1]$. It is easily seen that $S \cup A' \cup B$ is a monitoring set of $G'$.

(3) There exists a minimum monitoring set $S$ of $G'$ such that $V \cap S$ is a vertex-cover of $G$ and
$|V(G_{e_i} \cap A \cap S| = n_i$ for every $i \in [1, |E|]$: assume that $V \cap S$ is not a vertex cover of $G$.
Let $e_i = \{ u,  v\}$ be an edge in $G$ not covered by $V \cap S$.
Then, it is easily seen that $|V(G_{e_i}) \cap A \cap S| > n_i$.
Otherwise, an edge $\{b_{i, j}, b'_{i, j}\}$ for some $j$ is not covered by $S$.
Thus, we can replace these vertices by $u$ and $n_i$ vertices in $V(G_{e_i} \cap A$
which monitors every edge $\{b_{i,j}, b'_{i,j}\}$. By iterating this processus on every edge in $G$, we obtain a set $S'$ with the desired properties.
Now, assume that $V \cap S$ is a vertex cover of $G$ but there is some $i$ such that $|V(G_{e_i}) \cap A \cap S'| \neq n_i$. It is easily seen
that $|V(G_{e_i}) \cap A \cap S'| < n_i$ implies that an edge $\{b_{i, j}, b'_{i, j}\}$ for some $j$ is not covered by $S'$
and $|V(G_{e_i}) \cap A \cap S'| > n_i$ implies that $S'$ is not minimum.
\end{proof}

\subsection{A PTAS for planar graphs}

Now, we introduce a PTAS for planar graphs and fore more general graph classes: apex-minor-free families of graphs.

An apex graph is a graph $G$ such that for some vertex $v$, $G-v$ is planar.
A minor of a graph $G$ is graph that can be obtained from $G$ by a serie of vertex deletions, edge deletions and edge contractions.
Given a graph $H$, a family of graphs $\mathcal{F}$ is $H$-minor-free if $H$ is not a minor of any graph $G \in \mathcal{F}$.
A family of graphs $\mathcal{F}$ is apex-minor-free if it is $H$-minor-free for some apex graph $H$.
A minor-closed family $\mathcal{F}$ of graphs has bounded local treewidth if there is some function $f$ such that every graph in $\mathcal{F}$ with diameter $d$ has treewidth at most $f(d)$.

In our proof, we use this fundamental property.
\begin{theorem}
\cite{eppstein2000diameter} Let $\mathcal{F}$ be a minor-closed family of graphs. Then $\mathcal{F}$ has bounded local treewidth iff $\mathcal{F}$ is apex-minor-free.
\end{theorem}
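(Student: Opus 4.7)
The plan is to prove the equivalence in both directions, following the approach of Eppstein \cite{eppstein2000diameter}.

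For the ``only if'' direction, suppose $\mathcal{F}$ is minor-closed and not apex-minor-free. The plan is to exhibit explicit apex graphs of bounded diameter with unbounded treewidth. Since by hypothesis $\mathcal{F}$ excludes no apex graph as a minor, minor-closure forces $\mathcal{F}$ to contain every apex graph. I would take $G_k$ to be the $k \times k$ grid with a single universal vertex added: this is an apex graph with diameter $2$ and treewidth $\Theta(k)$. No function $f$ can simultaneously satisfy $\mathrm{tw}(G_k) \leq f(2)$ for all $k$, so the local treewidth of $\mathcal{F}$ is unbounded.

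For the ``if'' direction, suppose $\mathcal{F}$ is $H$-minor-free for some apex graph $H = H_0 + v$ with $H_0$ planar. I would proceed in three stages. First, establish bounded local treewidth in the bounded-genus case: a BFS layering argument rooted at an arbitrary vertex shows that, for graphs embedded in a surface of genus $g$, the subgraph induced by any $r$ consecutive layers has treewidth bounded by a function of $r$ and $g$, since each single layer is essentially a cycle on the surface. A graph of diameter $d$ consists of at most $d+1$ such layers, yielding the bound $\mathrm{tw}(G) \leq f_g(d)$. Second, invoke a structural decomposition of $H$-minor-free graphs into a tree-sum of almost-embedded pieces of bounded genus, with a bounded number of bounded-width vortices and, crucially, a bounded number of apex vertices per piece, where this last bound is precisely what the hypothesis of $H$ being apex ensures. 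Third, combine the pieces: bounded local treewidth is preserved under tree-sums along small-size adhesions, and the decomposition allows the diameter bound on $G$ to be transferred to a diameter-plus-apex bound on each piece.

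The main obstacle is the second stage. The full Robertson--Seymour structure theorem is sufficient but very heavy machinery; Eppstein's original argument is more elementary and specific to the apex-minor-free case, exploiting directly the fact that excluding an apex minor prevents the apex parts of a decomposition from growing unboundedly. In writing the proof I would either quote Eppstein's lemma as a black box or, for self-containment, reprove the apex-minor-free structure statement by induction on $|V(H)|$ starting from the bounded-genus base case. A secondary technicality is verifying that a ball of radius $d$ in $G$ meets only boundedly many pieces of the tree decomposition, so that the local treewidth function remains a function of $d$ alone; this requires a careful propagation of distance estimates across adhesions.
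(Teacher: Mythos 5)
First, note that the paper does not prove this statement at all: it is imported verbatim from Eppstein's paper \cite{eppstein2000diameter} and used as a black box (it feeds into Theorem \ref{ptas-planar}). So there is no in-paper argument to compare yours against; the relevant comparison is with Eppstein's own proof. Your ``only if'' direction is correct and is the standard argument: if $\mathcal{F}$ excludes no apex graph, minor-closure forces every apex graph into $\mathcal{F}$, and the $k \times k$ grid with a universal vertex has diameter $2$ and treewidth at least $k$, killing any bound of the form $\mathrm{tw} \leq f(\mathrm{diam})$.

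The ``if'' direction, however, is not a proof but a deferral of the entire difficulty. Your second stage quotes the Robertson--Seymour decomposition of $H$-minor-free graphs into clique-sums of almost-embeddable pieces, and asserts that the bounded number of apex vertices per piece ``is precisely what the hypothesis of $H$ being apex ensures.'' That is not where the hypothesis enters: the general structure theorem already gives a bounded number of apices per piece for an arbitrary excluded minor $H$, and bounded-genus-plus-bounded-apices pieces do \emph{not} have bounded local treewidth (a grid plus one universal vertex is such a piece). What the apex-ness of $H$ actually buys is control over \emph{how} those apices attach, and extracting that is the hard content of the theorem; saying you would ``reprove the structure statement by induction on $|V(H)|$'' does not close this gap. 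It is also worth knowing that Eppstein's original argument does not go through the full structure theorem at all: it combines the excluded-grid theorem (large treewidth forces a large grid minor) with a direct argument that, in a graph of diameter $d$, a sufficiently large grid minor can be augmented to an apex-grid minor, contradicting apex-minor-freeness; the structure-theorem route you sketch is the later Demaine--Hajiaghayi refinement that yields linear local treewidth. Either way, as written your proposal establishes only the easy direction.
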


We also need, the following result of Baste and al.
\begin{theorem}\label{baste}\cite{parametrizedplanar2016}
$\WMS$ is solvable in time in time $2^{O(\textrm{tw}^2 \log C)} |V|$
where $\mathrm{tw}$ is the treewidth of $G$.
\end{theorem}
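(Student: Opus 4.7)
The plan is to perform dynamic programming on a nice tree decomposition of $G$. First, I would compute such a decomposition $(T, \{X_t\}_{t \in T})$ of width $\mathrm{tw}(G)$ with $O(|V|)$ nodes (each of type leaf, introduce vertex, introduce edge, forget vertex, or join), which can be done in FPT time in $\mathrm{tw}$. At each node $t$ I would maintain a table whose entries are indexed by pairs $(S, \mu)$, where $S \subseteq X_t$ selects which vertices of the bag lie in the monitoring set, and $\mu : E(G[X_t]) \to \{0, 1, \ldots, C\}$ records, for every edge with both endpoints still in the bag, the number of distinct monitors found so far among vertices already processed in the subtree below $t$, capped at $c(e)$. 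The table stores the minimum weight of a partial solution realizing $(S, \mu)$.

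The transitions follow the usual pattern. At an introduce vertex node, the new vertex either joins $S$ (at the cost of its weight) or not. At an introduce edge node for $\{x, y\}$, the counter $\mu_{\{x,y\}}$ is initialized from the intersection of $S$ with the common neighbors of $x$ and $y$ in the current bag. At a forget node for $v$, I would discard all states in which some edge incident to $v$ has not yet reached its required count $c(e)$: the key observation is that in any tree decomposition every clique sits inside a single bag, so every triangle $\{x, y, v\}$---and hence every potential monitor of an edge incident to $v$---has already appeared in a bag together with both endpoints of the edge. At a join node, consistent states (same $S$ restricted to the shared bag) are combined, and for each edge $e$ the two counters are summed while subtracting the contribution from $S \cap X_t$ to avoid double-counting the monitors lying inside the bag itself, then capped at $c(e)$.

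Since $|X_t| \leq \mathrm{tw}+1$ and hence $|E(G[X_t])| = O(\mathrm{tw}^2)$, the number of DP states per bag is bounded by $2^{\mathrm{tw}+1} \cdot (C+1)^{O(\mathrm{tw}^2)} = 2^{O(\mathrm{tw}^2 \log C)}$, and each transition can be evaluated in time polynomial in this quantity. With $O(|V|)$ nodes in the decomposition and after multiplying by the cost of handling a join, one obtains the claimed running time $2^{O(\mathrm{tw}^2 \log C)} |V|$. Correctness follows by a standard bottom-up induction whose invariant is that $(S, \mu)$ is stored with weight $w^\star$ if and only if there is a partial assignment on the vertices of the subtree rooted at $t$, of weight $w^\star$, restricting to $S$ on $X_t$ and inducing the counts $\mu$. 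The main obstacle is the join step: ensuring that monitors inside the shared bag contribute exactly once to the combined count is what forces the inclusion-exclusion correction above, and any miscount here would break both the monotonicity of the counters and the forget-time validity check.
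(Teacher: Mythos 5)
The paper does not actually prove this statement: Theorem~\ref{baste} is imported verbatim from \cite{parametrizedplanar2016}, and the only remark made here is that the proof there ignores vertex weights but "can easily be generalized". So you are reconstructing the cited dynamic programming rather than competing with an argument in this paper. Your choice of state --- a subset $S$ of the bag together with a counter in $\{0,\dots,C\}$ for each pair of adjacent bag vertices --- is exactly what yields $2^{\mathrm{tw}}\cdot (C+1)^{O(\mathrm{tw}^2)} = 2^{O(\mathrm{tw}^2\log C)}$ states, so the architecture and the running-time analysis are the right ones.

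There is, however, a genuine gap in how monitors are credited to the counters. A monitor $u$ of $e=\{x,y\}$ shares \emph{some} bag with $x$ and $y$ (the Helly property you invoke), but that bag can sit anywhere in the connected region of the decomposition where $x$ and $y$ coexist, whereas you add monitors to $\mu_e$ only once, at the introduce-edge node of $e$, and only those found in that one bag. Take a path decomposition with consecutive bags $\{u,x,y\}$, $\{x,y\}$, $\{w,x,y\}$, where $u$ and $w$ are both adjacent to $x$ and $y$ but not to each other: wherever the introduce-edge node of $\{x,y\}$ is placed, at least one of $u,w$ is either already forgotten or not yet introduced there, its contribution is never recorded (your introduce-vertex transition does not touch the counters), and a feasible solution with $c(\{x,y\})=2$ is wrongly rejected at the forget node of $x$. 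In other words, the invariant you state for $\mu$ ("monitors among vertices already processed below $t$") is not maintained by the transitions you describe, and the justification given at the forget step ("every triangle has already appeared in a bag") establishes only that the information \emph{was available} at some point, not that your updates captured it. The standard repair is to let $\mu_e$ count only \emph{forgotten} monitors, to credit a vertex $u\in S$ at its forget node to every edge of the remaining bag that it monitors, and to perform the feasibility test for $e$ when its first endpoint is forgotten by adding to $\mu_e$ the monitors of $e$ still present in the bag; connectivity of the subtrees guarantees each monitor is counted by exactly one of these two mechanisms. Your join rule and the capping arithmetic go through once this invariant is fixed.
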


Notice that the proof of this theorem in \cite{parametrizedplanar2016} does not consider weights on vertices but we can easily generalize it.

Now, We can prove our theorem.

\begin{theorem}\label{ptas-planar}
There exists a PTAS for $\WMS$ on any weighted apex-minor-free families of graphs.
\end{theorem}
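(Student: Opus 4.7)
The plan is a Baker-style layering PTAS that reduces the problem to polynomially many instances of bounded treewidth, each solved exactly via Theorem \ref{baste}. Given $\epsilon > 0$, I would fix $k = \lceil 2/\epsilon \rceil$ and invoke Eppstein's theorem \cite{eppstein2000diameter} to obtain a function $f$ such that, for every graph $G \in \mathcal{F}$, the subgraph induced by any $q$ consecutive layers of a BFS of $G$ has treewidth at most $f(q)$.

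The algorithm would root a BFS at an arbitrary vertex of $G$, producing layers $L_0, L_1, \ldots, L_D$. For each shift $s \in \{0, \ldots, k-1\}$ and each $j \geq 0$, define the core block $B_j^{(s)} = L_{jk+s} \cup \cdots \cup L_{(j+1)k+s-1}$ (the core blocks partition $V$ for a fixed $s$) and the extended block $\tilde{B}_j^{(s)} = L_{jk+s-1} \cup \cdots \cup L_{(j+1)k+s}$, padded by one layer on each side (empty for out-of-range indices). I would then assign each edge $e$ to the core block containing its smaller-layer endpoint. Since BFS layers obey the triangle inequality, both endpoints of $e$ and every vertex monitoring $e$ lie in $\tilde{B}_j^{(s)}$ whenever $e$ is assigned to $(s, j)$. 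For each $(s, j)$, apply Theorem \ref{baste} to the weighted graph $(G[\tilde{B}_j^{(s)}], w, c_j^{(s)})$, where $c_j^{(s)}$ agrees with $c$ on edges assigned to $(s, j)$ and is zero otherwise, obtaining an optimal local solution $S_j^{(s)}$. Set $S^{(s)} = \bigcup_j S_j^{(s)}$ and return the $s$ minimising $w(S^{(s)})$.

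Correctness is immediate: each edge is $c(e)$-monitored by the local optimum of its assigned block, so $S^{(s)}$ is a valid monitoring set. For the ratio, $\opt \cap \tilde{B}_j^{(s)}$ is feasible for the $(s, j)$-subproblem because all potential monitors of edges assigned to $(s, j)$ lie in $\tilde{B}_j^{(s)}$; hence $w(S_j^{(s)}) \leq w(\opt \cap \tilde{B}_j^{(s)})$. Consecutive extended blocks (for fixed $s$) overlap in exactly two layers, so the total multiplicity of any $v \in \opt$ across all $k$ shifts is $(k-2) \cdot 1 + 2 \cdot 2 = k + 2$; summing gives $\sum_s w(S^{(s)}) \leq (k+2)\,w(\opt)$, and thus $\min_s w(S^{(s)}) \leq (1 + 2/k)\,w(\opt) \leq (1 + \epsilon)\,w(\opt)$. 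Since each $G[\tilde{B}_j^{(s)}]$ has treewidth at most $f(k+2)$, a constant depending only on $\epsilon$ and $\mathcal{F}$, Theorem \ref{baste} solves each of the $O(|V|)$ subproblems in polynomial time.

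The delicate step I expect to be the main obstacle is calibrating the padding and the edge-to-block assignment: extended blocks must be wide enough that every monitor of every assigned edge is captured (otherwise local subsolutions cannot be combined into a global monitoring set), yet the overlap between consecutive extended blocks must remain $O(1)$ so that the averaging produces a $1 + O(1/k)$ factor instead of a constant one. The observation that BFS triangle inequality confines the monitors of an edge $\{u, v\}$ to the layers of $u$ and $v$ together with at most one adjacent layer is what makes a single-layer padding on each side sufficient.
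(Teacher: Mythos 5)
Your proposal is correct and follows essentially the same route as the paper: Baker/Eppstein layering with $k$-layer core blocks padded to $k{+}2$ layers, exact bounded-treewidth subproblems via Theorem \ref{baste} with weights zeroed outside the core, and the same $\frac{k+2}{k}$ averaging over shifts (the paper phrases it as a pigeonhole on the overlap layers rather than summing multiplicities, and lets each subproblem cover all edges touching the core block rather than assigning each edge to one block, but these are cosmetic differences). The only point you gloss over is that bounded local treewidth bounds the treewidth of a block of consecutive layers only after contracting the inner layers to a single apex vertex, a standard step the paper itself relegates to a footnote.
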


\begin{proof}
Let $H$ be an apex graph.
Without loss of generality, we consider $\WMS$ on the maximal  minor-closed family of graphes $\mathcal{F}$ that excludes the graph $H$.
Thus, $\mathcal{F}$ has bounded local treewidth.
We use the classical Baker's technique \cite{baker94} on planar graphes generalized by Eppstein \cite{eppstein2000diameter} on bounded local treewidth families of graphs. 
Let $(G, w, c)$ be a weighted graph with $G \in \mathcal{F}$.
Without loss of generality, we assume that $G$ is connected.
We choose an arbitrary vertex $v \in V$ and we define $L_i$ as set of vertices at distance $i$ from $v$.
$L_i$ is called the layer of level $i$. Let $l$ be the maximal distance between $v$ and a vertex of $G$.
These layers can be obtained in linear time by breadth first search. 
The key idea is that, since $\mathcal{F}$ has bounded local treewidth, the graph induced by $k$ consecutive layers $L_i, \ldots , L_{i+k-1}$ has a treewidth bounded by $f(k+1)$ \footnote{the supergraph obtained from $G$ by removing all layers $L_j$ with $j \geq i+k$ and by contracting all layers $L_j$
with $j < i$ into one vertex belongs to $\mathcal{F}$ and has a diameter at most $k+1$. Thus its treewidth is at most $f(k+1)$}.
Another important point is that every edge of $G$ has extremities in the same layer or in two consecutive layers.

Fix $\epsilon > 0$ and $k$ such that $\frac{k+2}{k} \leq 1+ \epsilon$.
We will give a $\frac{k+2}{k}$-approximation algorithm that is polynomial for a fixed $k$. We define $B_i$ as the union of $k$ consecutive layers
$L_{i} \cup \ldots \cup L_{i+k-1}$ \footnote{if the index of a layer is not in the interval $[0, l]$, the layer is considered empty} and $R_i$ as 
the union of $k+2$ consecutive layers $L_{i-1} \cup \ldots \cup L_{i+k}$.
Let $P_i$ be the subproblem whose output is a set $S$ of minimum weight in $R_i$ that monitors all edges having at least one extremity in $B_i$.
Since $G[R_i]$ has treewidth at most $f(k+1)$, we can solve this problem in polynomial time using Theorem \ref{baste} by replacing the weight $c(e)$ of edges $e$ having both extremities outside $B_i$ with $0$.

Now, we present Algorithm 3 that is a PTAS for $\WMS$ on  apex-minor-free families of graphs.
\begin{algorithm}
 \label{algo_planar}
\caption{PTAS for $\WMS$ on an apex-minor-free family of graphs}
 \begin{algorithmic}[1] 
\Require $(G, c, w)$, $\epsilon > 0$
\State let $k$ such that $\frac{k+2}{k} \leq 1+ \epsilon$
\If{there exists an edge $e \in E$ such that $c(e) > M(e)$}
\State \Return False
\EndIf
\For{$i$ from $0$ to $k-1$}
   \For{$j$ from $-1$ to $\lceil \frac{l}{k} \rceil$}
   \State let $S_{i,j}$ be an optimal solution of $P_{i+kj}$
   \EndFor
   \State let $S_i= S_{i,-1} \cup \ldots \cup S_{i,\lceil \frac{l}{k} \rceil}$
\EndFor
 \State let $S$ be a set $S_i$ such that $w(S_i)$ is minimal
 \State     \Return $S$

 \end{algorithmic}
\end{algorithm}

It is clear that Algorithm 3 runs in polynomial time when $\epsilon$ is fixed.
Let us prove that Algorithm 3 is correct.
First, notice that there exists a monitoring set of $(G,c)$ if and only if Line 2 of Algorithm 3 fails.
Now, assume that $(G,c)$ admits a monitoring set and let $\mbox{OPT}$ be an optimal solution for $\WMS(G,c,w)$.
Notice that, for any $i$, 
$S_i$ is a (not necessarily optimal) monitoring set of $(G,c)$.
Thus, $S$ is also a monitoring set of $(G,c)$.

Besides, $\mbox{OPT} \cap R_i$ is a (not necessarily optimal) solution of $P_i$.
Indeed, an edge that have an extremity in $B_i$ can only be monitored by vertices in $R_i$.
Consequently, for any $i$ and $j$, it holds that $w(S_{i,j}) \leq w(\mbox{OPT} \cap R_{i+kj})$.
Therefore, for any $i$, we have  $$ w(S_i) \leq  \sum_{j=-1}^{\lceil \frac{l}{k} \rceil}w(S_{i,j}) \leq 
\sum_{j=-1}^{\lceil \frac{l}{k} \rceil} w(\mbox{OPT} \cap R_{i+kj}) $$

There exists an integer $i \in [0,k-1]$ such that $w(\mbox{OPT} \cap (C_i \cup C_{i+1})) \leq \frac{2}{k} w(\mbox{OPT})$
where $C_i$ is the union of layers $L_{i'}$ with $i'$ congruent to $i$ modulo $k$.
Hence, there exists an integer  $i \in [0,k-1]$ such that 
$$  \sum_{j=-1}^{\lceil \frac{l}{k} \rceil} w(\mbox{OPT} \cap R_{i+kj}) \leq \frac{k+2}{k} w(\mbox{OPT}) $$
Thus, we obtain that $w(S) \leq \frac{k+2}{k} w(\mbox{OPT})$.
\end{proof}


\section{Conclusion and Further works}
In this paper, we considered a variant of the dominating set problem, called the edge monitoring problem on several classes of graphes.
We also discussed the weighted version of the edge monitoring problem.
In this section, we list a variety of problems for further work.

\par\leavevmode\par 
\textbf{Problem 1}: Study the problem on other classes of graphes: permutation graphs, strongly chordal graphs, etc.\\

\textbf{Problem 2}: Consider the following variant of the edge monitoring problem: assume that each vertex can monitor only a fixed  number of edges $t$. \\

\textbf{Problem 3}: Consider the variant of the edge monitoring problem where the monitoring set need to be connected,
namely connected edge monitoring problem. 

\bibliographystyle{abbrv}
\bibliography{biblio}

\end{document}